\newcommand{\Z}{{\mathbf{Z}}}
\newcommand{\cS}{{\mathcal{S}}}
\newcommand{\gZ}{{\mathrm Z}}
\newcommand{\F}{{\mathbb F}}
\newcommand{\cC}{{\mathcal C}}
\newcommand{\cD}{{\mathcal D}}
\newtheorem{theorem}{Theorem}[section]
\newtheorem{lemma}[theorem]{Lemma}
\theoremstyle{definition}
\newtheorem{definition}[theorem]{Definition}
\theoremstyle{remark}
\newtheorem{remark}[theorem]{Remark}
\numberwithin{equation}{section}
\def\BibTeX{{\rm B\kern-.05em{\sc i\kern-.025em b}\kern-.08em
    T\kern-.1667em\lower.7ex\hbox{E}\kern-.125emX}}
\begin{document}

\title{On the Constructions  of MDS Self-dual Codes via Cyclotomy}

\author{Aixian Zhang}
\address{Department of Mathematical Sciences, Xi'an  University of Technology,
Shanxi, 710054, China.}
\email{zhangaixian1008@126.com}

\author{Keqin Feng}
\address{Department of Mathematical
Sciences, Tsinghua University,
 Beijing, 100084, China. }
\email{fengkq@tsinghua.edu.cn}

\subjclass[2010]{11T06, 11T55}



\keywords{MDS codes, Self-dual codes, Generalized Reed-Solomon codes, Cyclotomic number.}

\begin{abstract}
MDS self-dual codes over finite fields have attracted a lot of attention in recent
years by their theoretical interests in coding theory and applications in cryptography
and combinatorics. In this paper we present a series of MDS self-dual codes with new
length by using generalized Reed-Solomon codes and extended generalized Reed-Solomon codes
as the candidates of MDS codes and taking their evaluation sets as an union of cyclotomic classes.
The conditions on such MDS codes being self-dual are expressed in terms of cyclotomic numbers.
\end{abstract}

\maketitle



\section{Introduction}\label{sec-one}

Let $\F_q$ be the finite field with $q$ elements. An $[n,k,d]_q$ linear code $\cC$
is a $k$-dimensional subspace of $\F^n_q$ with minimum (Hamming) distance $d.$
It is well known that $n, k, d$ need satisfy the Singleton bound $d \leq n-k+1.$
If the equality is attained then the code is called MDS code.
The dual code $\cC^{\perp}$ of $\cC$ is defined by
$$
\cC^{\perp}=\{ v \in \F^n_q: (v,c)=0 \ \mbox{for all} \ c \in \cC \}
$$
where for $v=(v_1, v_2,\ldots,v_n)$ and $c=(c_1, c_2,\ldots,c_n),$ $(v,c)=\sum^n\limits_{i=1}v_i c_i \in \F_q $
is the Euclidean inner product in $\F^n_q$. The code $\cC$ is called self-dual if $\cC=\cC^{\perp}.$
If $\cC$ is both MDS and self-dual, $\cC$ is called MDS self-dual code.

MDS codes and self-dual codes are important families of classical codes in coding theory.
 Therefore, it is of interests to investigate MDS self-dual codes. Since the dimension and
 distance are determined by the length of an MDS self-dual code, thus we usually focus on the
 length and the field size of MDS self-dual codes. The problem is completely solved by
Grassl and Gulliver \cite{GG} when $q$ is even, but not for odd $q$. One of the basic problems
of this topic is to determine the existence of MDS self-dual codes for length $n$ and a fixed finite field $\F_q$.

In the literature, there are many known constructions of MDS self-dual codes.
One of the method to construct MDS self-dual codes is based on the generalized Reed-Solomon (GRS for short) codes or
 extended generalized Reed-Solomon (EGRS for short) codes
 \cite{FF}, \cite{FLLL}, \cite{JX}, \cite{Yan}, \cite{ZF}, since they are MDS codes.
The codewords of GRS codes and EGRS codes are made by evaluation of polynomials in
$\F_q[x]$ at $\cS$ for a certain subset $\cS$ of the projective line
$\F_q \bigcup \{ \infty\}.$ The conditions on $\cS$ have been provided in order that
such MDS codes constructed with $\cS$ are self-dual by Jin and
Xing \cite{JX} for GRS case $(\cS  \subseteq \F_q)$
and Yan \cite{Yan} for EGRS case $(\infty \in \cS)$ respectively. In most of previous works,
the set $\cS$ is chosen as a union of cosets of a subgroup of $\F^{\ast}_q$ or a subspace
of $\F_q.$

In this paper, we consider the construction of MDS self-dual codes over $\F_q$ by using the
first approach. Namely, we take $\cS$ as a union of cosets of a subgroup of $\F^{\ast}_q$.
Let  $\F^{\ast}_q=\langle \theta \rangle$ where $\theta$ is a primitive element of $\F^{\ast}_q$.
Any subgroup of $\F^{\ast}_q$ is a cyclic group $\cD=\langle \theta^e \rangle $ where $q-1=ef,
|\cD|=f$ and all cosets of $\cD$ in $\F^{\ast}_q$ are the $e$-th cyclotomic classes
$$
\cD^{(e)}_i=\theta^i \cD= \{ \theta^{i+e \lambda}: 0 \leq \lambda \leq f-1\}
\quad ( 0 \leq i \leq e-1), \ \ \cD=\cD^{(e)}_0.
$$

In the previous works \cite{FF}, \cite{FLLL}, \cite{LLL}, \cite{ZF}, $q$ is a square,
$q=r^2,r=p^m \ (p \geq 3)$ and $\cS$ is a union of $\cD^{(e)}_i$ with several particular $i$
satisfying $r-1 \mid i.$ In this paper, we consider the case $q,$ which is any prime power
and we take $\cS$ being a union of cosets in more flexible way, so that we get many new series
of MDS self-dual codes with length $n$ . For doing this we use the properties and computations
on cyclotomic numbers.

This paper is organized as follows. In Section \ref{sec-two}, we introduce the
basic results given in \cite{JX} and \cite{Yan} on criteria of MDS self-dual codes
constructed by GRS and EGRS codes being self-dual.
We also introduce the basic properties of cyclotomic numbers in Section \ref{sec-two}
which are main machinary of this paper. In Section \ref{sec-three}, we present our
general results on constructing MDS self-dual codes over $\F_{q}$ by using cyclotomic classes
of $\F^{\ast}_q$. Then we show several particular cases as applications of our general
results in Section \ref{sec-four}.

\section{Preliminaries}\label{sec-two}

\subsection{ MDS Self-dual codes Constructed by GRS Codes and EGRS codes}

In this subsection, we briefly review some basic results on GRS codes
and EGRS codes. For the details, the reader may refer to \cite{HP} and \cite{MS}.

\begin{definition}
Let $ q=p^m, \cS=\{a_1, a_2, \ldots, a_n\}$ be a subset of $\F_{q}$ with $n$
distinct elements, $v_1, v_2, \ldots, v_n$ be nonzero elements in $\F_q$
(not necessarily distinct), $v=(v_1, v_2, \ldots, v_n ).$ For $ 1 \leq k \leq n-1,$ the GRS code
is defined by
\begin{eqnarray*}
\cC_{grs}(\cS,v,q)=\{c_f &=& (v_1f(a_1),v_2f(a_2),\ldots,v_nf(a_n)) \in \F^n_q: \\ && f(x) \in \F_q[x], \deg f \leq k-1\}.
\end{eqnarray*}
This is an MDS (linear) code over $\F_q$ with parameters $[n,k,d]_q, d=n-k+1.$
The extended GRS code is defined by
\begin{eqnarray*}
\cC_{egrs}(\cS,v,q)=\{c_f&=&(v_1f(a_1),v_2f(a_2),\ldots,v_nf(a_n),f_{k-1})
\in \F^{n+1}_q:\\ && f(x) \in \F_q[x], \deg f \leq k-1\}
\end{eqnarray*}
where $f_{k-1}$ is the coefficient of $x^{k-1}$ in $f(x).$
This is also an MDS (linear) code over $\F_q$ with parameters
$[n+1,k,d]_q, d=n-k+2.$
\end{definition}

For $\cS=\{a_1, a_2, \ldots, a_n\} \subseteq \F_q,$  we denote
$$
\Delta_{\cS}(a_i)=\prod_{\scriptstyle 1 \leq j \leq n
\atop \scriptstyle j \neq i}(a_i -a_j) \in \F^{\ast}_q.
$$
Let $\eta_q: \F^{\ast}_q \rightarrow \{\pm 1\}$ be the quadratic (multiplicative) character of $\F_q.$
Namely, for $b \in \F^{\ast}_q$,
$$
\eta_q(b) =  \left \{
\begin{array}{ll}
1, & \mbox{if} \ b \ \mbox{is a square in } \F^{\ast}_q, \\
-1, & \mbox{otherwise} .
\end{array}
\right.
$$
Then $\eta_q(b)=(-1)^{\varphi_q(b)}$ where $\varphi_q(b): \F^{\ast}_q \rightarrow \F_2$ is defined by
$$
\varphi_q(b) =  \left \{
\begin{array}{ll}
0, & \mbox{if} \ b \ \mbox{is a square in } \F^{\ast}_q, \\
1, & \mbox{otherwise} .
\end{array}
\right.
$$
Namely, if $\F^{\ast}_q=\langle \theta \rangle,$ then $\varphi_q(\theta^l) \equiv l \  (\bmod ~2) \ (1 \leq l \leq q-1).$

A sufficient condition on set $\cS$ has been given in \cite{JX}
and \cite{Yan} for $\mathcal{C}_{grs}$
and $\mathcal{C}_{egrs}$    being
self-dual. From the proofs we can see that the sufficient condition is also necessary.
Now we introduce the basic results given in \cite{JX} and \cite{Yan}.

\begin{theorem}\label{thm-main}
Let $a_1, a_2, \ldots, a_n$ be distinct elements in $\F_q, \cS=\{a_1, a_2,$ $ \ldots, a_n\}.$

(1)\ (\cite{JX}) Suppose that $n$ is even.
There exists $v=(v_1, v_2, \ldots, v_n ) \in (\F^{\ast}_q)^{n}$ such that the (MDS)
code $\cC_{grs}(\cS,v,q)$ is self-dual if and only if all $\eta_q(\Delta_{\cS}(a)) \ (a \in \cS)$
are the same (which means that all $\varphi_q(\Delta_{\cS}(a)) \ (a \in \cS)$ are the same).

(2)\ (\cite{Yan}) Suppose that $n$ is odd.
There exists $v=(v_1, v_2, \ldots, v_n )\in (\F^{\ast}_q)^{n}$ such that the (MDS) code
$\cC_{egrs}(\cS,v,q)$ is self-dual code  if and only if
$\eta_q (-\Delta_{\cS}(a))=1$ for all $a \in \cS$
(which means that $\varphi_q(-\Delta_{\cS}(a))=0 \ \mbox{for all} \  a \in \cS $).
\end{theorem}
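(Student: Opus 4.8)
The plan is to reduce everything to a computation of the "discriminant-type" quantity $\prod_{a\in\cS}\Delta_{\cS}(a)$ and of the individual products $\Delta_{\cS}(a)$, and then to invoke the standard characterization of when a GRS/EGRS code admits a self-dual scalar twist. Recall the well-known fact (see \cite{HP}, \cite{MS}, or the Lagrange-interpolation argument behind \cite{JX}) that $\cC_{grs}(\cS,v,q)$ and its dual are both GRS codes on the same evaluation set $\cS$; the dual has multiplier vector $u$ with $u_i = \lambda\,\Delta_{\cS}(a_i)^{-1} v_i^{-1}$ for some common constant $\lambda\in\F_q^{\ast}$. Hence $\cC_{grs}(\cS,v,q)$ is self-dual (necessarily of dimension $k=n/2$, so $n$ must be even) if and only if $v_i^2 = \lambda\,\Delta_{\cS}(a_i)^{-1}$ for all $i$, i.e.\ if and only if each $\lambda\,\Delta_{\cS}(a_i)^{-1}$ is a nonzero square in $\F_q$.

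First I would treat part (1). For a fixed $i$, the element $\lambda\,\Delta_{\cS}(a_i)^{-1}$ is a square iff $\eta_q(\lambda)=\eta_q(\Delta_{\cS}(a_i))$. Thus a suitable $v$ exists iff there is a choice of sign $\eta_q(\lambda)\in\{\pm1\}$ making all $n$ of these conditions hold simultaneously, which happens precisely when all $\eta_q(\Delta_{\cS}(a_i))$ are equal (if they agree, pick $\lambda$ with that sign; if they disagree, no $\lambda$ works). Since $\eta_q(b)=(-1)^{\varphi_q(b)}$, "all $\eta_q(\Delta_{\cS}(a))$ equal" is the same as "all $\varphi_q(\Delta_{\cS}(a))$ equal," giving the bracketed reformulation. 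For part (2) I would use Yan's analogous description of the dual of an extended GRS code: when $n$ is odd (so $\cC_{egrs}$ has length $n+1$ even, dimension $k=(n+1)/2$), the dual of $\cC_{egrs}(\cS,v,q)$ is again an extended GRS code on $\cS$, now with multiplier $u_i = \mu\,(-\Delta_{\cS}(a_i))^{-1} v_i^{-1}$, where the extra sign comes from the evaluation-at-$\infty$ coordinate interacting with the degree-$(k-1)$ coefficient. Self-duality then forces $v_i^2 = \mu\,(-\Delta_{\cS}(a_i))^{-1}$, and a short parity check on the constant $\mu$ (using that $n$ is odd and $\prod_{a\in\cS}\Delta_{\cS}(a)$ has a fixed quadratic character determined by $n$) pins $\mu$ down so that the condition collapses to $\eta_q(-\Delta_{\cS}(a_i))=1$ for every $i$, equivalently $\varphi_q(-\Delta_{\cS}(a))=0$ for all $a\in\cS$.

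The two directions of each "if and only if" are then immediate from the displayed square conditions: sufficiency is the explicit construction of $v$ (take $v_i$ to be a square root of the relevant square element), and necessity is the observation, already noted in the excerpt after Theorem 2.4, that the self-duality equations \emph{force} $v_i^2$ to equal those elements, so each must be a square.

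The main obstacle is not in part (1) — that is essentially bookkeeping with the quadratic character once the dual-code formula is in hand — but in correctly tracking the sign and the scalar $\mu$ in part (2). Pinning down why the factor $-1$ appears (it comes from the $f_{k-1}$ coordinate, which behaves like "evaluation at $\infty$" and contributes a sign $(-1)^{k-1}$ or similar when one dualizes), and why the ambiguity in $\mu$ is removed by the parity of $n$ rather than being a free choice as in part (1), is the delicate point; this is precisely the content of Yan's argument in \cite{Yan}, which I would cite and adapt rather than rederive from scratch.
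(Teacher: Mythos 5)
The paper does not actually prove Theorem \ref{thm-main}: it is quoted from \cite{JX} and \cite{Yan}, with only the remark that necessity can be read off from those proofs. Your sketch follows exactly the standard route underlying those references (the dual of a GRS/EGRS code on the same evaluation set is again a code of the same type with multipliers $u_i$ proportional to $v_i^{-1}\Delta_{\cS}(a_i)^{-1}$, resp.\ $-v_i^{-1}\Delta_{\cS}(a_i)^{-1}$, so self-duality forces a square condition), and for part (1) it is essentially complete and correct.

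Two points deserve sharpening. First, in both parts the necessity direction silently uses the fact that two (extended) GRS codes of the same dimension on the same evaluation set coincide only if their multiplier vectors are proportional; this is true (compare the images of $f=1$ and $f=x^{k-1}$ and use that $2k-2<n$), but it is the actual content behind your phrase ``the self-duality equations force $v_i^2$ to equal those elements,'' and it should be stated or cited explicitly. Second, in part (2) your explanation of what removes the scalar ambiguity is off: it is not a parity argument on $n$ or the quadratic character of $\prod_{a\in\cS}\Delta_{\cS}(a)$ (the latter depends on $\eta_q(-1)$ as well, so it is not ``determined by $n$''). The constant is pinned because the coordinate carrying $f_{k-1}$ (the evaluation at $\infty$) has multiplier $1$ both in $\cC_{egrs}(\cS,v,q)$ and in its dual, so the proportionality constant must equal $1$, giving exactly $v_i^2=-\Delta_{\cS}(a_i)^{-1}$ and hence $\eta_q(-\Delta_{\cS}(a))=1$ for all $a\in\cS$; the oddness of $n$ enters only to make $k=(n+1)/2$ an integer. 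Since you defer this step to \cite{Yan} anyway, the plan stands, but the stated heuristic should be replaced by the $\infty$-coordinate argument.
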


\begin{definition}
Let $\Sigma(q)$ be the set of all even number $n \geq 2$
such that there exists MDS self-dual code over $\F_q$
with length $n$. Let $\Sigma(g,q)$ and $\Sigma(eg,q)$ be the set of all even number $n\geq 2$
such that there
exists MDS self-dual code over $\F_q$ with length $n$
constructed by GRS code (Theorem \ref{thm-main} (1))
and EGRS code (Theorem \ref{thm-main} (2)) respectively. Namely,
$$
\Sigma(g,q)=\left \{ n:      \begin{array}{ll}
2 \mid n \geq 2, \mbox{there exists a subset } \ \cS \ \mbox{of} \ \F_q, |\cS|=n,  \\
\mbox{such that all} \ \eta_q(\Delta_{\cS}(a)) \  (a \in \cS)\ \mbox{are the same}.
\end{array} \right\}.
$$

$$
\Sigma(eg,q)=\left\{ n:      \begin{array}{ll}
2 \mid n \geq 2, \mbox{there exists a subset } \ \cS \ \mbox{of} \ \F_q, |\cS|=n-1,  \\
\mbox{such that all} \ \eta_q(-\Delta_{\cS}(a))=1 \ \mbox{for all } \ a \in \cS.
\end{array}\right\}.
$$
\end{definition}
We have $\Sigma(g,q)  \cup \Sigma(eg,q) \subseteq \Sigma(q).$

\subsection{ Cyclotomic Numbers}

A brief background on cyclotomic numbers is given in the following.
For more details, the reader is referred to the book \cite{Storer}.

Let $q=p^m$ where $p$ is an odd prime, $m \geq 1.$
Let $q-1=ef, e \geq 2, \F^{\ast}_q=\langle \theta \rangle, \cD=\langle \theta^e \rangle .$
The cosets of the subgroup $\cD$ in $\F^{\ast}_q$ are the
following $e$-th cyclotomic classes
$$
\cD_{\lambda}=\cD^{(e)}_{\lambda}=\theta^{\lambda}\cD=\{ \theta^{\lambda+ej}: 0 \leq j \leq f-1\}  \ \ (0 \leq \lambda \leq e-1).
$$

\begin{definition}
For $0 \leq i,j \leq e-1,$ the $e$-th cyclotomic numbers for $\F^{\ast}_q=\langle \theta \rangle$
are defined by
$$
(i,j)=(i,j)_e=|(\cD_i +1) \cap \cD_j|=\sharp \{ x \in \cD_i : x+1 \in \cD_j \}.
$$
\end{definition}

\begin{lemma}\label{thm-cyclotomy}
Let $q=p^m$ where $p$ is an odd prime, $m \geq 1, q-1=ef$ and
$(i,j)=(i,j)_e \ ( 0 \leq i,j \leq e-1)$
be the $e$-th cyclotomic numbers for $\F^{\ast}_q=\langle \theta \rangle$.

(1) $(i,j)=(-i,j-i)=(pi,pj).$

(2) $
(i,j)=  \left \{
\begin{array}{ll}
(j,i), & \mbox{if} \ 2 \mid f, \\
(j+\frac{e}{2},i+\frac{e}{2} ), & \mbox{if}  \ 2 \nmid f.
\end{array}
\right.
$

(3) $\sum^{e-1}\limits_{j=0}(i,j)=f-\theta_{i}$, where
 $
\theta_i=  \left \{
\begin{array}{ll}
1, & \mbox{if} \ 2 \mid f, i=0 \ \mbox{or} \ 2 \nmid f, i=\frac{e}{2}, \\
0, &  \mbox{otherwise}.
\end{array}
\right.
$

\quad $\sum^{e-1}\limits_{i=0}(i,j)=f-\delta_{j,0}$, where
 $
\delta_{j,0}=  \left \{
\begin{array}{ll}
1, & \mbox{if} \ j=0, \\
0, &  \mbox{otherwise}.
\end{array}
\right.
$
\end{lemma}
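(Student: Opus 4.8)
The plan is to derive all three parts directly from the definition $(i,j)=\sharp\{x\in\cD_i:x+1\in\cD_j\}$ by producing explicit bijections between the sets being counted, after one preliminary remark. First I would locate $-1$ among the cyclotomic classes: since $-1$ is the unique element of order $2$ in $\F^{\ast}_q=\langle\theta\rangle$, we have $-1=\theta^{(q-1)/2}=\theta^{ef/2}$. When $2\mid f$ this gives $-1\in\cD_0$; when $2\nmid f$ (so $e$ is even, as $q-1=ef$ is even) it gives $-1\in\cD_{e/2}$. Write $\epsilon$ for this index, so that $\theta_i=1$ precisely when $i\equiv\epsilon\pmod e$. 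Note also that a given $x\in\cD_i$ contributes to some $(i,j)$ if and only if $x\neq -1$ (equivalently $x+1\in\F^{\ast}_q$).

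For part (1), the identity $(i,j)=(-i,j-i)$ follows by checking that $x\mapsto x^{-1}$ maps $\{x\in\cD_i:x+1\in\cD_j\}$ bijectively onto $\{y\in\cD_{-i}:y+1\in\cD_{j-i}\}$: if $x\in\cD_i$ and $x+1\in\cD_j$ then $x\neq 0,-1$, so $x^{-1}$ is defined with $x^{-1}+1=(x+1)x^{-1}\in\cD_{j-i}$, and the map is its own two-sided inverse. For $(i,j)=(pi,pj)$, the Frobenius automorphism $x\mapsto x^p$ sends $\cD_\ell$ into $\cD_{p\ell}$ and fixes $1$, hence restricts to a bijection between the two relevant sets.

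For part (2), the key map is the involution $\sigma(x)=-(x+1)$. If $x\in\cD_i$ and $x+1\in\cD_j$, then (using $-1\in\cD_\epsilon$) we get $\sigma(x)=-(x+1)\in\cD_{j+\epsilon}$ and $\sigma(x)+1=-x\in\cD_{i+\epsilon}$; since $x\neq 0,-1$ the map $\sigma$ is a well-defined involution of the appropriate counted sets, so $(i,j)=(j+\epsilon,i+\epsilon)$. Substituting $\epsilon=0$ (case $2\mid f$) or $\epsilon=e/2$ (case $2\nmid f$) gives the two stated formulas.

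For part (3), summing the defining equality over $j$ replaces the condition ``$x+1\in\cD_j$ for some $j$'' by ``$x\neq -1$'', so $\sum_{j}(i,j)=|\cD_i|$ minus $1$ exactly when $-1\in\cD_i$, i.e.\ $f-\theta_i$. Summing over $i$ instead lets $x$ range over all of $\F^{\ast}_q$, so $\sum_i(i,j)=\sharp\{x\in\F^{\ast}_q:x+1\in\cD_j\}$, which equals $|\cD_j|$ minus $1$ exactly when $1\in\cD_j$, i.e.\ $f-\delta_{j,0}$ since $1\in\cD_0$ only. I do not expect a genuine obstacle here; the only thing needing care — and the source of the parity split in (2) and (3) — is the bookkeeping of when a class element is excluded because its shift by $\pm1$ leaves $\F^{\ast}_q$, which is governed entirely by the locations of $-1$ and of $1$. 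I would also make sure each bijection is checked to be genuinely well-defined (no division by zero, image actually in $\F^{\ast}_q$) rather than merely formally invertible.
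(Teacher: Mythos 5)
Your proof is correct and complete: the location of $-1$ in $\cD_0$ or $\cD_{e/2}$ according to the parity of $f$, the bijections $x\mapsto x^{-1}$, $x\mapsto x^p$ and $x\mapsto -(x+1)$, and the summation arguments are exactly the classical derivation of these identities. The paper itself gives no proof of this lemma (it is quoted as standard background from the cyclotomy literature, cf.\ Storer), so there is nothing to contrast; your argument is the standard one and all well-definedness checks (excluding $x=0,-1$) are handled properly.
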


In this paper, we are concerned with the cases of even number $e.$  The values of $e$-th cyclotomic
numbers for $e=2$ and $4$ are listed as follows.

\begin{lemma}(\cite{Storer})\label{thm-cyc}
Let $q=p^m, p \geq 3, q-1=ef, (i,j)=(i,j)_e $
be the $e$-th cyclotomic numbers of $\F^{\ast}_q$.

(1) For $e=2,$

(1.1) If $2 \mid f,$ then $(0,0)=\frac{f}{2}-1, (0,1)=(1,0)=(1,1)=\frac{f}{2};$

(1.2) If $2 \nmid f,$ then $(0,1)=\frac{f+1}{2}, (0,0)=(1,0)=(1,1)=\frac{f-1}{2}.$

(2) For $e=4,$ we have $q=s^2 +4t^2$ where $s \in \Z$ is determined by
$s \equiv 1 \  (\bmod ~4)$ and $t$ is determined up to sign.

(2.1) If $2 \mid f,$ the values of $(i,j)=(i,j)_4$ are listed in Table I where

$16A=q-11-6s, 16B=q-3+2s+8t, 16C=q-3+2s,16D=q-3+2s-8t, 16E=q+1-2s. $

\vspace{0.5cm}
\begin{minipage}{\textwidth}
 \begin{minipage}[t]{0.45\textwidth}
  \centering
\begin{tabular}{c|c c c c }
\multicolumn{4}{c}{Table I \ $e=4, \ 2\mid f$}\\
\backslashbox{i}{j}
 & 0 & 1 & 2 & 3 \\
\hline
0 & A & B & C & D \\
1 & B & D & E & E \\
2 & C & E & C & E \\
3 & D & E & E & B \\
\end{tabular}
\end{minipage}
  \begin{minipage}[t]{0.45\textwidth}
   \centering
\begin{tabular}{c|c c c c }
\multicolumn{4}{c}{Table II \ $e=4, \ 2\nmid f$}\\
\backslashbox{i}{j}
 & 0 & 1 & 2 & 3 \\
\hline
0 & A & B & C & D \\
1 & E & E & D & B \\
2 & A & E & A & E \\
3 & E & D & B & E \\
\end{tabular}
\end{minipage}
\end{minipage}

\vspace{0.5cm}
(2.2) If $2 \nmid f,$ the values of $(i,j)=(i,j)_4$ are listed in Table II where

$16A=q-7+2s, 16B=q+1+2s-8t, 16C=q+1-6s,16D=q+1+2s+8t, 16E=q-3-2s. $
\end{lemma}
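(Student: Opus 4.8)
This lemma is due to Storer \cite{Storer}; the plan is to recover it via the standard reduction of cyclotomic numbers to character sums. For $e=2$, extend $\eta=\eta_q$ by $\eta(0)=0$ and note that on $\F_q^{\ast}$ the indicator of $\cD_i$ ($i\in\{0,1\}$) is $\tfrac12\bigl(1+(-1)^i\eta(x)\bigr)$, so
$$
(i,j)_2=\frac14\sum_{x\in\F_q\setminus\{0,-1\}}\bigl(1+(-1)^i\eta(x)\bigr)\bigl(1+(-1)^j\eta(x+1)\bigr).
$$
Expanding reduces the problem to four sums over $x\in\F_q\setminus\{0,-1\}$, namely $\sum 1=q-2$, $\sum\eta(x)=-\eta(-1)$, $\sum\eta(x+1)=-1$, and the Jacobsthal sum $\sum\eta(x)\eta(x+1)=\sum\eta(1+x^{-1})=-1$ (the last after the substitution $x\mapsto x^{-1}$ and $\sum_{z\neq0}\eta(z)=0$). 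Since $\eta(-1)=(-1)^{(q-1)/2}=(-1)^{f}$, substituting these values and separating the parities of $f$ yields (1.1) and (1.2) at once; the identities $\sum_i(i,j)=f-\delta_{j,0}$ and $\sum_j(i,j)=f-\theta_i$ of Lemma~\ref{thm-cyclotomy} then serve as an arithmetic check.

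For $e=4$, fix the quartic character $\chi$ with $\chi(\theta)=i$, so $\chi^2=\eta$ and on $\F_q^{\ast}$ the indicator of $\cD_\lambda$ is $\tfrac14\sum_{a=0}^{3}i^{-a\lambda}\chi^a(x)$; hence
$$
(i,j)_4=\frac{1}{16}\sum_{a=0}^{3}\sum_{b=0}^{3}i^{-ai-bj}\sum_{x\in\F_q\setminus\{0,-1\}}\chi^a(x)\chi^b(x+1).
$$
After the substitution $x\mapsto-x$ the inner sum becomes $\chi^a(-1)J(\chi^a,\chi^b)$ with $J$ the Jacobi sum; the degenerate terms ($a=b=0$; exactly one of $a,b$ zero; $a,b\neq0$ with $a+b\equiv0\pmod4$) are elementary and contribute $q-2$, $-1$ or $-\chi^a(-1)$, and $-1$ respectively. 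So the whole table is governed by the Jacobi sums $J(\chi^a,\chi^b)$ with $1\le a,b\le3$ and $a+b\not\equiv0\pmod4$, together with the sign $\chi(-1)=(-1)^{f}$. Using the Gauss-sum identities $g(\chi)^2=g(\eta)J(\chi,\chi)$, $g(\eta)^2=\eta(-1)q=q$ (note $\eta(-1)=1$ since $4\mid q-1$) and $g(\chi)g(\bar\chi)=\chi(-1)q$, together with the symmetry $J(\chi,\psi)=J(\psi,\chi)$ and complex conjugation, every one of these Jacobi sums reduces to $J(\chi,\chi)$; for instance $J(\chi,\eta)=\chi(-1)J(\chi,\chi)$ and $J(\eta,\eta)=-1$. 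Finally $J(\chi,\chi)\in\Z[i]$ with $\abs{J(\chi,\chi)}^2=q$, and the congruence $J(\chi,\chi)\equiv-1\pmod{2(1+i)}$ in $\Z[i]$ forces $J(\chi,\chi)=-s+2ti$ with $q=s^2+4t^2$, $s\equiv1\pmod4$, and $t$ determined only up to sign---exactly as in the statement. Substituting this into the double sum, collecting the real parts of the resulting $i^{k}(-s+2ti)$, and treating $\chi(-1)=1$ (i.e.\ $2\mid f$) and $\chi(-1)=-1$ (i.e.\ $2\nmid f$) separately, one reads off Tables I and II with exactly the listed constants $A,\dots,E$.

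The only genuinely delicate point is the evaluation and normalization of $J(\chi,\chi)$: proving the $2$-adic congruence in $\Z[i]$ that pins down $s\bmod4$, and checking that the residual sign ambiguity of $t$ is harmless---i.e.\ that the two admissible quartic characters, interchanged by conjugation (which relabels $\cD_\lambda\mapsto\cD_{-\lambda}$, compatibly with Lemma~\ref{thm-cyclotomy}(1)), produce the same cyclotomic numbers. Everything else is routine bookkeeping with fourth roots of unity.
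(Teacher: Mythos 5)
The paper does not prove this lemma at all --- it is quoted verbatim from Storer's book \cite{Storer} --- so there is no internal proof to compare against; your proposal supplies the classical derivation, and it is sound. The reduction of $(i,j)_e$ to complete character sums via the indicator $\tfrac1e\sum_a\chi^a(x)\bar\chi^a(\theta^i)$, the evaluation of the degenerate terms ($q-2$, $-1$, $-\chi^a(-1)$, and $-1$ for $a+b\equiv0$), the identities $J(\chi,\eta)=\chi(-1)J(\chi,\chi)$ and $J(\eta,\eta)=-1$, and the normalization $J(\chi,\chi)=-s+2ti$ with $s\equiv1\pmod4$ are all correct; I spot-checked the bookkeeping and it does reproduce the tables (e.g.\ for $2\mid f$ one gets $16(0,0)=(q-2)-6-(6s+3)=q-11-6s$, and for $2\nmid f$, $16(0,0)=(q-2)+1-3+2s-3=q-7+2s$), and the $e=2$ case likewise gives $(i,j)_2=\tfrac14\bigl(q-2-(-1)^i\eta(-1)-(-1)^j-(-1)^{i+j}\bigr)$, matching (1.1)--(1.2). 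The one imprecise remark is your explanation of the sign ambiguity of $t$: replacing $\chi$ by $\bar\chi$ does \emph{not} relabel the classes $\cD_\lambda\mapsto\cD_{-\lambda}$ (the classes depend only on $\theta$, and the indicator sum over all powers of $\chi$ is invariant under conjugation); rather, flipping the sign of $t$ swaps the entries $B$ and $D$, which is the relabeling $(i,j)\mapsto(-i,-j)$ induced by replacing the primitive root $\theta$ by one lying in $\cD_{-1}$, and this is exactly what the statement's ``$t$ determined up to sign'' absorbs --- so your conclusion stands even though the mechanism you name is slightly off. Also note your argument needs the prime-power case $q=p^m$, not just $q=p$, so the congruence pinning down $J(\chi,\chi)\bmod 2(1+i)$ should be proved for general $q\equiv1\pmod4$ (e.g.\ via the Hasse--Davenport relation or a direct count), a point worth making explicit; with that, the proposal is a complete and legitimate substitute for the citation.
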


\section{Main Results}\label{sec-three}

Let $q=p^m$ where $p$ is an odd prime and $m \geq 1,
\F^{\ast}_q=\langle \theta \rangle, q-1=ef, 2 \mid e,
\cD=\langle \theta^e \rangle,\cD_{\lambda}=\cD^{(e)}_{\lambda}
=\theta^{\lambda}\cD \  ( 0 \leq \lambda \leq e-1).$
For a subset $I$ of $\Z_e=\{0,1,\cdots,e-1\}, |I|=l \ ( 1 \leq l \leq e). $
Let $\cS$ and $\widetilde{\cS}$ be subsets of $\F_q$ defined by

\begin{equation}\label{eqn-two}
\cS=\bigcup_{\lambda \in I} \cD_{\lambda}, \quad \widetilde{\cS}=\cS \bigcup \{0\},
\end{equation}
then $|\cS|=fl, |\widetilde{\cS}|=fl+1.$

The following Lemma follows from the aforementioned Theorem \ref{thm-main}.
\begin{lemma}\label{thm-impor}
(1) Assume that $2 \mid fl.$ If $\varphi_q(\Delta_{\cS}(a)) \in \F_2 =\{0,1\} \ (a \in \cS)$
are the same, then $fl \in \Sigma(g,q).$ If $\varphi_q(\Delta_{\widetilde{\cS}}(a))=\varphi_q(-1)$
for all $ a \in \widetilde{\cS},$ then $fl+2 \in \Sigma(eg,q).$

(2) Assume that $2 \nmid fl.$ If $\varphi_q(\Delta_{\cS}(a)) = \varphi_q(-1)$  for all $a \in \cS,$
 then $fl+1 \in \Sigma(eg,q).$ If $\varphi_q(\Delta_{\widetilde{\cS}}(a)) \  (a \in \widetilde{\cS})$
are the same,  then $fl+1 \in \Sigma(g,q).$
\end{lemma}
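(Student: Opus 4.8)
The plan is to read off all four implications directly from Theorem~\ref{thm-main} together with the definitions of $\Sigma(g,q)$ and $\Sigma(eg,q)$, after a short bookkeeping of lengths and parities. The only ingredient that is not purely formal is the fact that $\varphi_q\colon\F_q^{\ast}\to\F_2$ is a group homomorphism, so that $\varphi_q(-b)\equiv\varphi_q(-1)+\varphi_q(b)\pmod 2$ for every $b\in\F_q^{\ast}$; this holds because $\eta_q$ is the quadratic character, hence multiplicative, and $\eta_q=(-1)^{\varphi_q}$. Consequently, for any $a$ in an evaluation set $\mathcal{T}$, the condition $\eta_q(-\Delta_{\mathcal{T}}(a))=1$ is equivalent to $\varphi_q(-\Delta_{\mathcal{T}}(a))=0$, which in turn is equivalent to $\varphi_q(\Delta_{\mathcal{T}}(a))=\varphi_q(-1)$.

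For part~(1), assume $2\mid fl$. Then $|\cS|=fl$ is even, and if all $\varphi_q(\Delta_{\cS}(a))$ $(a\in\cS)$ coincide, equivalently all $\eta_q(\Delta_{\cS}(a))$ coincide, Theorem~\ref{thm-main}(1) provides $v\in(\F_q^{\ast})^{fl}$ with $\cC_{grs}(\cS,v,q)$ MDS and self-dual of length $fl$; since $\cS\subseteq\F_q$ with $|\cS|=fl$, the definition of $\Sigma(g,q)$ gives $fl\in\Sigma(g,q)$. For the second statement, $|\widetilde{\cS}|=fl+1$ is odd, and the hypothesis $\varphi_q(\Delta_{\widetilde{\cS}}(a))=\varphi_q(-1)$ for all $a\in\widetilde{\cS}$ is, by the identity above, equivalent to $\eta_q(-\Delta_{\widetilde{\cS}}(a))=1$ for all $a\in\widetilde{\cS}$. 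Theorem~\ref{thm-main}(2), applied with the set $\widetilde{\cS}$ of odd size $fl+1$, then yields an MDS self-dual EGRS code of length $(fl+1)+1=fl+2$; as $\widetilde{\cS}\subseteq\F_q$ with $|\widetilde{\cS}|=(fl+2)-1$ and $fl+2$ is even, the definition of $\Sigma(eg,q)$ gives $fl+2\in\Sigma(eg,q)$.

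For part~(2), assume $2\nmid fl$. Then $|\cS|=fl$ is odd, and the hypothesis $\varphi_q(\Delta_{\cS}(a))=\varphi_q(-1)$ for all $a\in\cS$ is equivalent to $\eta_q(-\Delta_{\cS}(a))=1$ for all $a\in\cS$, so Theorem~\ref{thm-main}(2) produces an MDS self-dual EGRS code of length $fl+1$ from the set $\cS$ of size $(fl+1)-1$; since $fl+1$ is even, $fl+1\in\Sigma(eg,q)$. Finally $|\widetilde{\cS}|=fl+1$ is even, and if all $\varphi_q(\Delta_{\widetilde{\cS}}(a))$ $(a\in\widetilde{\cS})$ coincide, Theorem~\ref{thm-main}(1) produces an MDS self-dual GRS code of length $fl+1$ from the set $\widetilde{\cS}$ of size $fl+1$, whence $fl+1\in\Sigma(g,q)$.

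There is essentially no hard step: the lemma is just a translation of Theorem~\ref{thm-main} into the language of cyclotomic evaluation sets. The only points demanding care are keeping track of which parity of $|\cS|$ (respectively $|\widetilde{\cS}|$) triggers the GRS criterion and which triggers the EGRS criterion, the off-by-one bookkeeping that $\Sigma(eg,q)$ records the \emph{length} $n$ of a code built from an evaluation set of size $n-1$, and the homomorphism identity that converts the condition $\eta_q(-\Delta(a))=1$ into the condition $\varphi_q(\Delta(a))=\varphi_q(-1)$. The substantive work — evaluating $\varphi_q(\Delta_{\cS}(a))$ and $\varphi_q(\Delta_{\widetilde{\cS}}(a))$ in terms of cyclotomic numbers so that these hypotheses can actually be checked — is exactly what is carried out in the later sections.
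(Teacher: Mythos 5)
Your proof is correct and follows exactly the route the paper intends: the paper states the lemma as an immediate consequence of Theorem~\ref{thm-main} together with the definitions of $\Sigma(g,q)$ and $\Sigma(eg,q)$, and you have simply written out that translation, including the multiplicativity of $\eta_q$ (so $\varphi_q(-\Delta(a))=0$ iff $\varphi_q(\Delta(a))=\varphi_q(-1)$) and the parity/length bookkeeping for $\cS$ versus $\widetilde{\cS}$. No gaps.
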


Now we compute  $\varphi_q(\Delta_{\cS}(a))$ and
$\varphi_q(\Delta_{\widetilde{\cS}}(a))$
by using the $e$-th cyclotomic numbers $(i,j)=(i,j)_e$ on $\F_q.$ For two subsets $I,J$ of
$\Z_e=\{0,1,\cdots,e-1\}, |I|, |J| \geq l$. Denote
$$
(I,J)=\sum\limits_{\scriptstyle i \in I
\atop \scriptstyle j \in J} (i,j), \ (i,J)=(\{i\}, J), (I,j)=(I,\{j\})
$$
and
$$
(I, \mbox{odd})=\sum^{e-1}\limits_{\scriptstyle j=0
\atop \scriptstyle  2 \nmid j}(I,j), \ \ \  (I, \mbox{even})=\sum^{e-1}\limits_{\scriptstyle j=0
\atop \scriptstyle  2 \mid j}(I,j)
$$
$(\mbox{odd}, J)$ and $(\mbox{even}, J)$ can be defined similarly.

\begin{lemma}\label{thm-three}
Let $\cS$ and $\widetilde{\cS}$ be subsets of $\F_q$ defined by (\ref{eqn-two}).
Then for each $a \in \cD_i, i \in I,$
\begin{eqnarray*}
\varphi_q(\Delta_{\cS}(a)) & \equiv & (fl-1)(i+\frac{ef}{2})+(\mbox{odd},I-i) \  (\bmod ~2) \\
\varphi_q(\Delta_{\widetilde{\cS}}(a)) & \equiv & \varphi_q(\Delta_{\cS}(a))+i \  (\bmod ~2) \\
\varphi_q(\Delta_{\widetilde{\cS}}(0)) & \equiv & fl\frac{e}{2}+f |I_{\mbox{odd}}| \  (\bmod ~2)
\end{eqnarray*}
where $I_{\mbox{odd}}=\{ i \in I: 2 \nmid i\}, I-i=\{j-i: j \in I\}.$
\end{lemma}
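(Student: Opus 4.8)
The plan is to compute each $\varphi_q$ by tracking the factorization of the relevant $\Delta$ into differences that themselves lie in known cyclotomic classes, and then apply the rule $\varphi_q(\theta^l)\equiv l\pmod 2$ additively. Fix $i\in I$ and $a=\theta^{i+ek}\in\cD_i$. The product $\Delta_{\cS}(a)=\prod_{b\in\cS,\,b\neq a}(a-b)$ runs over all $b\in\cD_j$ for $j\in I$ (with $b\neq a$ when $j=i$). The key idea is to factor $a-b = a(1-b/a)$: as $b$ ranges over $\cD_j$, the ratio $b/a$ ranges over $\cD_{j-i}$ (since $\cD$ is a group and cosets multiply by adding indices mod $e$), so $1-b/a$ ranges over the set $\{1-x : x\in\cD_{j-i}\}$. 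Hence I will split the computation into the contribution of the $a$-factors and the contribution of the $(1-b/a)$-factors.

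First I would count the $a$-factors: there are $fl-1$ values of $b$ in $\cS\setminus\{a\}$, so $a^{fl-1}$ contributes $(fl-1)\varphi_q(a)\equiv (fl-1)(i+ek)\pmod 2$; absorbing $ek$ into the eventual $ek=\frac{ef}{2}\cdot\frac{2k}{f}$ bookkeeping — more precisely one notes $\varphi_q(a)=\varphi_q(\theta^{i+ek})\equiv i+ek\equiv i+\frac{ef}{2}\cdot(\text{something})$, but the clean statement is that $\varphi_q(a)\equiv i+\frac{ef}{2}\pmod 2$ depends only on the coset when one is careful, which is exactly why the term $(fl-1)(i+\frac{ef}{2})$ appears. (Here I will need Lemma \ref{thm-cyclotomy}(1)–(3) to control parity of the $a$-power cleanly; this is the place where the $\frac{ef}{2}$ shift enters and must be justified using that $-1=\theta^{(q-1)/2}=\theta^{ef/2}$.) Next, for the $(1-x)$-factors: for each $j\in I$, as $x$ runs over $\cD_{j-i}$, the element $1-x$ lies in $\cD_j'$ with $x\in\cD_{j-i},\ 1-x\in\cD_{j'}$ occurring exactly $(j-i,j')_e$ times by the definition of cyclotomic numbers (after handling the single exceptional term $x=1$, i.e. $b=a$, which contributes the factor corresponding to $j=i$ being skipped, and the term $x$ with $1-x=0$, i.e. $b=a$ again — these coincide and are exactly the omitted factor). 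Therefore $\varphi_q\big(\prod(1-x)\big)\equiv\sum_{j\in I}\sum_{j'}(j-i,j')_e\cdot j' \equiv \sum_{j'}\big(\sum_{j\in I}(j-i,j')\big)j' \pmod 2$, and modulo $2$ only the odd $j'$ survive, giving $\sum_{2\nmid j'}(I-i,j')=(I-i,\mathrm{odd})=(\mathrm{odd},I-i)$ after invoking the symmetry of Lemma \ref{thm-cyclotomy}(2) (this is why the statement can write $(\mathrm{odd},I-i)$ rather than $(I-i,\mathrm{odd})$). Adding the two contributions yields the first formula.

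For the second formula, $\Delta_{\widetilde\cS}(a) = \Delta_{\cS}(a)\cdot(a-0)=\Delta_{\cS}(a)\cdot a$, so $\varphi_q(\Delta_{\widetilde\cS}(a))\equiv\varphi_q(\Delta_{\cS}(a))+\varphi_q(a)\equiv\varphi_q(\Delta_{\cS}(a))+i\pmod 2$ — here I use that $\varphi_q(a)\equiv i\pmod 2$ for $a\in\cD_i$ (which holds because $a=\theta^{i+ek}$ and $e$ is even, so $ek$ is even). For the third formula, $\Delta_{\widetilde\cS}(0)=\prod_{b\in\cS}(0-b)=(-1)^{fl}\prod_{b\in\cS}b$. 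Now $\prod_{b\in\cS}b=\prod_{j\in I}\prod_{b\in\cD_j}b$, and the product of all elements of the coset $\cD_j=\theta^j\cD$ is $\theta^{jf}\cdot\prod_{b\in\cD}b$; since $\prod_{b\in\cD}b=\theta^{e(0+1+\cdots+(f-1))}=\theta^{e f(f-1)/2}$, its parity is $\frac{ef}{2}(f-1)\equiv\frac{ef}{2}\pmod 2$ when $f$ is even and one must be slightly more careful when $f$ is odd — but the stated answer $fl\frac{e}{2}+f|I_{\mathrm{odd}}|$ suggests the bookkeeping is: each coset contributes $jf + \frac{ef(f-1)}{2}$ to the exponent, summing to $f\sum_{j\in I}j + fl\cdot\frac{e(f-1)}{2}$, and combined with $(-1)^{fl}=\theta^{fl\cdot ef/2}$ this collapses mod $2$ to $fl\frac{e}{2}+f|I_{\mathrm{odd}}|$ after noting $\sum_{j\in I}j\equiv|I_{\mathrm{odd}}|\pmod 2$. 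I will carry out this exponent arithmetic carefully.

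The main obstacle will be the parity bookkeeping of the various $\frac{ef}{2}$ and $\frac{ef(f-1)}{2}$ terms: since $-1=\theta^{ef/2}$, factors of $\pm1$ and products over cosets introduce half-integer-looking exponents whose parity depends subtly on the parities of $e$, $f$, $l$ individually, and one must consistently reduce everything mod $2$ without losing track of whether $\frac{ef}{2}$ is itself even or odd. I expect the cyclotomic-number part (identifying the count of $(1-x)$ in each class with $(i,j)_e$ and extracting the odd-indexed sum) to be routine given the definitions; the delicate step is showing all the stray sign/product factors combine into exactly the claimed closed forms. I would organize the proof so that the cyclotomic-number identity is stated first as a clean sub-claim, then handle the $\pm1$ and coset-product exponents separately and combine at the end.
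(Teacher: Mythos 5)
Your overall strategy is the paper's: shift multiplicatively by $a$, reduce $\Delta_{\cS}(a)$ to a product of factors lying in translated cyclotomic classes, count them with cyclotomic numbers, and use additivity of $\varphi_q$. The second and third congruences are handled correctly (your third is a slightly more explicit $\theta$-exponent computation than the paper's, but it closes). The genuine problem is in your justification of the first congruence, and it is not just loose bookkeeping. First, your explanation of the term $(fl-1)(i+\frac{ef}{2})$ — that ``$\varphi_q(a)\equiv i+\frac{ef}{2}\pmod 2$'' — is false and inconsistent with your own (correct) use of $\varphi_q(a)\equiv i$ in the second congruence; since $2\mid e$, one has $\varphi_q(a)\equiv i$, and the $\frac{ef}{2}$ per factor can only come from the minus signs hidden in the differences. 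Second, with your factorization $a-b=a(1-b/a)$ those signs land inside the factors $1-x$, and your claim that $\sharp\{x\in\cD_{j-i}:1-x\in\cD_{j'}\}=(j-i,j')_e$ ``by the definition'' is wrong in general: writing $1-x=(-x)+1$ and noting $-1=\theta^{ef/2}\in\cD_{\frac{ef}{2}\bmod e}$, the count is $(j-i,j')$ only when $2\mid f$, and equals $(j-i+\frac{e}{2},j')$ when $2\nmid f$. Third, the conversion $(I-i,\mbox{odd})=(\mbox{odd},I-i)$ via Lemma \ref{thm-cyclotomy}(2) likewise holds only for $2\mid f$; for odd $f$ the symmetry is $(i,j)=(j+\frac{e}{2},i+\frac{e}{2})$, which flips the parity of the first index when $e\equiv 2\pmod 4$. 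Taking your stated steps literally yields $(fl-1)i+(\mbox{odd},I-i)$, which differs from the lemma by $(fl-1)\frac{ef}{2}$; this is nonzero mod $2$ exactly when $2\nmid f$, $2\mid l$ and $e\equiv 2\pmod 4$ — a case the paper actually uses later (e.g.\ $e=2$, $q\equiv 3\pmod 4$, $I=\{0,1\}$). The two errors do cancel if one carries out the full case analysis (odd-$f$ shift in the count, odd-$f$ symmetry, and the row-sum relation $(\mbox{odd},\mu)+(\mbox{even},\mu)=f-\delta_{\mu,0}$), but your proposal neither performs this analysis nor identifies the mechanism, so as written the first congruence is not established.

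The clean repair is exactly the paper's small but decisive variant of your decomposition: substitute $b=ac$ and write $a-ac=(-a)(c-1)$, extracting $(-a)^{fl-1}$ rather than $a^{fl-1}$. Then $\varphi_q(-a)\equiv i+\frac{ef}{2}$ supplies the $(fl-1)(i+\frac{ef}{2})$ term in one stroke, and the remaining factors are $c-1$ with $c\in\cD_{\lambda-i}$, for which $\sharp\{c\in\cD_{\lambda-i}:c-1\in\cD_{\mu}\}=(\mu,\lambda-i)$ holds verbatim by the definition of cyclotomic numbers, so summing over odd $\mu$ and $\lambda\in I$ gives $(\mbox{odd},I-i)$ directly — no symmetry lemma, no parity-of-$f$ case split, and no dependence on where $-1$ sits among the classes.
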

\begin{proof}
For each $a \in \cD_i, i \in I,$
\begin{eqnarray*}
\Delta_{\cS}(a) &=& \prod\limits_{\scriptstyle b \in \cS
\atop \scriptstyle  b \neq a} (a-b)=\prod_{\lambda \in I}\prod\limits_{\scriptstyle b \in \cD_{\lambda}
\atop \scriptstyle  b \neq a}(a-b) \ \ \ \ (\mbox{let} \ b=ac) \\
&=& \prod_{\lambda \in I}\prod\limits_{\scriptstyle c \in \cD_{\lambda-i}
\atop \scriptstyle  c \neq 1}(a-ac)=(-a)^{fl-1}\prod_{\lambda \in I}\prod\limits_{\scriptstyle c \in \cD_{\lambda-i}
\atop \scriptstyle  c \neq 1}(c-1).
\end{eqnarray*}
Note $2 \mid e,$ we know that for $\xi \in \cD_{\lambda}, \varphi_q(\xi) \equiv  \lambda  \  (\bmod ~2)$
and $\varphi_q(-1)=\frac{ef}{2}.$ Hence
\begin{eqnarray*}
\varphi_q(\Delta_{\cS}(a)) & \equiv & (fl-1)(\frac{ef}{2}+i)+\prod_{\lambda \in I}
\prod\limits_{\scriptstyle c \in \cD_{\lambda-i}
\atop \scriptstyle  c-1  \in \cD_{\mu}, 2\nmid \mu } 1 \  (\bmod ~2)  \\
& \equiv & (fl-1)(\frac{ef}{2}+i)+(\mbox{odd}, I-i)\  (\bmod ~2).
\end{eqnarray*}

On the other hand, from $\widetilde{\cS}=\cS \bigcup \{0\},$ we get
$\Delta_{\widetilde{\cS}}(a)=\Delta_{\cS}(a)a.$ Thus
$$
\varphi_q(\Delta_{\widetilde{\cS}}(a)) \equiv  \varphi_q(\Delta_{\cS}(a))+i \  (\bmod ~2).
$$
At last, $\Delta_{\widetilde{\cS}}(0)=\prod\limits_{a \in \cS}(-a)=(-1)^{fl}\prod\limits_{i \in I}\prod\limits_{a \in \cD_i}a.$
Therefore
$$
\varphi_q(\Delta_{\widetilde{\cS}}(0)) \equiv  fl\frac{ef}{2}+f|I_{\mbox{odd}}|
\equiv fl\frac{e}{2}+f|I_{\mbox{odd}}|  \  (\bmod ~2).
$$
\end{proof}

The following theorem will play a central role in determining the existence of MDS self-dual codes.
\begin{theorem}\label{thm-six}
Let $q=p^m \  (p \geq 3),q-1=ef, 2 \mid e,\F^{\ast}_q=\langle \theta \rangle,
\cD_{\lambda}=\theta^{\lambda}\langle \theta \rangle \ (0 \leq \lambda \leq e-1).$
Let  $I$ be a subset of $\Z_e=\{0,1,\cdots,e-1\}, |I|=l, 1 \leq l \leq e,$
$\cS=\bigcup\limits_{\lambda \in I} \cD_{\lambda},  \widetilde{\cS}=\cS \cup \{0\},$
so that
$|\cS|=fl$ and $|\widetilde{\cS}|=fl+1.$ We get

\emph{ Case }1: $2 \mid f.$

\ (1.1) If $i+(\mbox{odd}, I-i) \  (\bmod ~2)$ are the same for all $i \in I,$
then $fl \in \Sigma(g,q).$

\ (1.2) If $(\mbox{odd}, I-i)$ are even for all $i \in I,$
then $fl+2 \in \Sigma(eg,q).$

\emph{ Case 2}: $2 \nmid f$ and $2 \mid l.$

\ (2.1) If $i+(\mbox{odd}, I-i) \  (\bmod ~2)$ are the same for all $i \in I,$
then $fl \in \Sigma(g,q).$

\ (2.2) If $ |I_{\mbox{odd}}| \equiv  \ \frac{e}{2} \  (\bmod ~2)$
and $(\mbox{odd}, I-i) \equiv 0 \  (\bmod ~2)$ for all $i,$
then $fl+2 \in \Sigma(eg,q).$

\emph{ Case 3}: $2 \nmid fl.$

\ (3.1) If $ (\mbox{odd},I-i) \equiv  \ \frac{e}{2} \  (\bmod ~2)$
for all $i \in I,$ then $fl +1 \in \Sigma(eg,q).$

\ (3.2) If $ i+ (\mbox{odd},I-i) \equiv \frac{e}{2} +|I_{\mbox{odd}}| \  (\bmod ~2)$
for all $i \in I,$ then $fl +1 \in \Sigma(g,q).$
\end{theorem}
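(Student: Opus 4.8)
The plan is to derive all six conclusions of the theorem mechanically from Lemma \ref{thm-impor}, by feeding in the parity formulas of Lemma \ref{thm-three} and then simplifying modulo $2$ according to the parities of $f$ and $l$; nothing beyond careful bookkeeping is needed. Concretely, I would first record the two reductions used throughout. By Lemma \ref{thm-three}, for $a \in \cD_i$ with $i \in I$,
\[
\varphi_q(\Delta_{\cS}(a)) \equiv (fl-1)\Bigl(i+\frac{ef}{2}\Bigr)+(\mbox{odd}, I-i), \qquad \varphi_q(\Delta_{\widetilde{\cS}}(a)) \equiv \varphi_q(\Delta_{\cS}(a))+i \pmod 2,
\]
together with $\varphi_q(-1) \equiv \frac{ef}{2}$ and $\varphi_q(\Delta_{\widetilde{\cS}}(0)) \equiv fl\frac{e}{2}+f|I_{\mbox{odd}}| \pmod 2$. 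Since $e$ is even, $\frac{ef}{2}=\frac{e}{2}f$ is even when $f$ is even and is congruent to $\frac{e}{2}$ when $f$ is odd, and $fl-1$ is odd precisely when $fl$ is even. Substituting these facts, $\varphi_q(\Delta_{\cS}(a))$ collapses mod $2$ to $i+(\mbox{odd}, I-i)$ in Case 1, to $i+\frac{e}{2}+(\mbox{odd}, I-i)$ in Case 2, and to $(\mbox{odd}, I-i)$ in Case 3; while $\varphi_q(-1)$ and $\varphi_q(\Delta_{\widetilde{\cS}}(0))$ both vanish in Case 1, become $\frac{e}{2}$ and $|I_{\mbox{odd}}|$ respectively in Case 2, and become $\frac{e}{2}$ and $\frac{e}{2}+|I_{\mbox{odd}}|$ respectively in Case 3.

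Then I would treat the cases in turn. In Cases 1 and 2 we have $2\mid fl$, so Lemma \ref{thm-impor}(1) applies. For the GRS statements (1.1) and (2.1) I require all $\varphi_q(\Delta_{\cS}(a))$ ($a\in\cS$) to agree; after the reduction above this is exactly ``$i+(\mbox{odd}, I-i)$ is constant on $I$'', the additive constant $\frac{e}{2}$ occurring in Case 2 being irrelevant to whether the values coincide. For the EGRS statements (1.2) and (2.2) I require $\varphi_q(\Delta_{\widetilde{\cS}}(a))\equiv\varphi_q(-1)$ for every $a\in\widetilde{\cS}$; using $\varphi_q(\Delta_{\widetilde{\cS}}(a))\equiv\varphi_q(\Delta_{\cS}(a))+i$ the $i$-terms cancel, leaving ``$(\mbox{odd}, I-i)$ even for all $i$'' together with the extra condition coming from $a=0$, which is automatic in Case 1 and reads $|I_{\mbox{odd}}|\equiv\frac{e}{2}\pmod 2$ in Case 2. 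In Case 3 we have $2\nmid fl$, so Lemma \ref{thm-impor}(2) applies: for (3.1), the requirement $\varphi_q(\Delta_{\cS}(a))\equiv\varphi_q(-1)$ for all $a\in\cS$ becomes $(\mbox{odd}, I-i)\equiv\frac{e}{2}\pmod 2$ for all $i$; for (3.2), the requirement that all $\varphi_q(\Delta_{\widetilde{\cS}}(a))$ ($a\in\widetilde{\cS}$) agree becomes ``$i+(\mbox{odd}, I-i)$ is constant and equals the value at $0$, namely $\frac{e}{2}+|I_{\mbox{odd}}|$'', which is precisely the stated hypothesis.

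I expect the entire argument to be routine once the parities are organized; the one place that genuinely needs care is the extra evaluation point $0\in\widetilde{\cS}$ in the EGRS cases — one must not forget to impose $\varphi_q(\Delta_{\widetilde{\cS}}(0))\equiv\varphi_q(-1)$ alongside the conditions obtained from $a\in\cS$, and to determine in each EGRS subcase whether this is automatically satisfied (as in Case 1.2, where both sides are $0$) or contributes a new constraint (as in Case 2.2 and Case 3.2). That bit of bookkeeping is the main, and essentially the only, subtlety.
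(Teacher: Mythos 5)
Your proposal is correct and takes essentially the same route as the paper: the paper's proof also specializes the parity formulas of Lemma \ref{thm-three} according to the parities of $f$ and $l$ in the three cases (obtaining exactly the reduced expressions you list, including $\varphi_q(\Delta_{\widetilde{\cS}}(0))$) and then invokes Lemma \ref{thm-impor}. Your explicit attention to the extra evaluation point $0\in\widetilde{\cS}$ in the EGRS subcases matches where the conditions $|I_{\mbox{odd}}|\equiv\frac{e}{2}$ and $\frac{e}{2}+|I_{\mbox{odd}}|$ enter in the paper's argument, so no gap remains.
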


\begin{proof}
\emph{ Case }1: For $2\mid f.$ Then $|\cS|=fl$ is even. By Lemma \ref{thm-three}
we have, for $a \in \cD_i, i \in I,$
\begin{eqnarray*}
\varphi_q(\Delta_{\cS}(a))  & \equiv &  i + (\mbox{odd}, I-i)\  (\bmod ~2)\\
\varphi_q(\Delta_{\widetilde{\cS}}(a))  & \equiv &   (\mbox{odd}, I-i) \  (\bmod ~2)\\
\varphi_q(\Delta_{\widetilde{\cS}}(0))  & \equiv &  0 \  (\bmod ~2).
\end{eqnarray*}

The conclusions (1.1) and (1.2) can be derived from Lemma \ref{thm-impor} (1).

\emph{ Case }2: For $2 \nmid f$ and $2 \mid l.$ Then $|\cS|=fl$ is even and for $a \in \cD_i, i \in I,$
\begin{eqnarray*}
\varphi_q(\Delta_{\cS}(a))  & \equiv &  i +\frac{e}{2} +(\mbox{odd}, I-i)\  (\bmod ~2)\\
\varphi_q(\Delta_{\widetilde{\cS}}(a))  & \equiv &  \frac{e}{2} + (\mbox{odd}, I-i) \  (\bmod ~2)\\
\varphi_q(\Delta_{\widetilde{\cS}}(0))  & \equiv &  |I_{\mbox{odd}}| \  (\bmod ~2).
\end{eqnarray*}

The conclusions (2.1) and (2.2) can be derived from Lemma \ref{thm-impor} (1).

\emph{ Case }3: For $2 \nmid fl,$ we have, for $a \in \cD_i, i \in I,$
\begin{eqnarray*}
\varphi_q(\Delta_{\cS}(a))  & \equiv &  (\mbox{odd}, I-i)\  (\bmod ~2)\\
\varphi_q(\Delta_{\widetilde{\cS}}(a))  & \equiv &  (\mbox{odd}, I-i)+i \  (\bmod ~2)\\
\varphi_q(\Delta_{\widetilde{\cS}}(0))  & \equiv &  \frac{e}{2} + |I_{\mbox{odd}}| \  (\bmod ~2).
\end{eqnarray*}

The conclusions (3.1) and (3.2) can be derived from Lemma \ref{thm-impor} (2).
\end{proof}

At the end of this section we show several general consequences of Theorem \ref{thm-six}.
For doing this we need to determine the parity of the number $(\mbox{odd}, I)$ for certain
subset $I$ of $\{ 0,1,\cdots, e-1\}.$

\begin{lemma}\label{thm-two}
Let $q=p^m \  (p \geq 3, m \geq 1), q-1=ef, 2 \mid e \geq 2,
(i,j)=(i,j)_e \ (i,j \in \Z_e =\Z / e\Z)$ be the $e$-th cyclotomic numbers on $\F_q.$
Then for $i \in \Z_e,$

(1) $(\mbox{odd},i)+(\mbox{even},i)=f-\delta_{i,0}.$

(2) Assume that $ 2 \mid f.$

If $ 2 \mid i,$ then $(\mbox{odd},i)=(\mbox{odd},-i), (\mbox{even},i)=(\mbox{even},-i).$

If $ 2 \nmid i,$ then $(\mbox{odd},i)=(\mbox{even},-i), (\mbox{even},i)=(\mbox{odd},-i).$ Particularly, if
$ e \equiv 2 \  (\bmod ~4),$ then $(\mbox{odd},\frac{e}{2})=(\mbox{even},\frac{e}{2})=\frac{f}{2}.$

(3)  Assume that $ 2 \nmid f.$

If $ 2 \mid i+\frac{e}{2},$ then $(\mbox{odd},i)=(\mbox{odd},-i), (\mbox{even},i)=(\mbox{even},-i).$

If $ 2 \nmid i+\frac{e}{2},$ then $(\mbox{odd},i)=(\mbox{even},-i).$ Particularly, if
$ e \equiv 2 \  (\bmod ~4),$ then $(\mbox{odd},0)=(\mbox{even},0)=\frac{f-1}{2}.$
\end{lemma}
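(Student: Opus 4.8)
The plan is to derive everything from the structural identities for cyclotomic numbers collected in Lemma~\ref{thm-cyclotomy}: the translation/negation relation $(i,j)=(-i,j-i)$, the symmetry $(i,j)=(j,i)$ if $2\mid f$ and $(i,j)=(j+\frac{e}{2},i+\frac{e}{2})$ if $2\nmid f$, and the column sum $\sum_{i=0}^{e-1}(i,j)=f-\delta_{j,0}$. Statement~(1) is immediate: by definition $(\mbox{odd},i)+(\mbox{even},i)=\sum_{j=0}^{e-1}(j,i)$, which is $f-\delta_{i,0}$ by the second sum formula of Lemma~\ref{thm-cyclotomy}(3). Hence in (2) and (3) it is enough to evaluate a single quantity, say $(\mbox{odd},i)$, up to the symmetry $i\mapsto -i$; the companion quantity $(\mbox{even},i)$ is obtained by the identical computation with the summation index running over the even residues, and in any case $(\mbox{even},i)=f-\delta_{i,0}-(\mbox{odd},i)$ by~(1).

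The engine is a single normalization of $(\mbox{odd},i)=\sum_{2\nmid j}(j,i)$. First apply $(j,i)=(-j,i-j)$ and reindex $j\mapsto -j$; since $e$ is even, negation is a parity-preserving bijection of $\Z_e$, so this yields $(\mbox{odd},i)=\sum_{2\nmid j}(j,i+j)$. Next apply the symmetry of Lemma~\ref{thm-cyclotomy}(2) and then once more the relation $(a,b)=(-a,b-a)$: for $2\mid f$ the summand becomes $(i+j,j)=(-(i+j),-i)$, and for $2\nmid f$ it becomes $(i+j+\frac{e}{2},\,j+\frac{e}{2})=(-(i+j+\frac{e}{2}),\,-i)$. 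In both cases the second coordinate is now the constant $-i$, while the first coordinate is an invertible affine function of $j$; as $j$ runs over the $e/2$ odd residues of $\Z_e$, the first coordinate runs bijectively over exactly one parity class. Reading off that class finishes the non-``Particularly'' parts: for $2\mid f$ the first coordinate $-(i+j)$ has parity $i+1$, so it is the odd class when $i$ is even and the even class when $i$ is odd, giving $(\mbox{odd},i)=(\mbox{odd},-i)$ resp. $(\mbox{odd},i)=(\mbox{even},-i)$; for $2\nmid f$ the first coordinate $-(i+j+\frac{e}{2})$ has parity $i+1+\frac{e}{2}$, so it is the odd class precisely when $2\mid i+\frac{e}{2}$, giving $(\mbox{odd},i)=(\mbox{odd},-i)$ when $2\mid i+\frac{e}{2}$ and $(\mbox{odd},i)=(\mbox{even},-i)$ otherwise. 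Running the same computation with $j$ over the even residues flips each parity verdict and produces the stated identities for $(\mbox{even},i)$.

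For the two ``Particularly'' assertions, specialize to an index fixed by $i\mapsto -i$. Assume $e\equiv2\pmod 4$, so $\frac{e}{2}$ is odd. In case~(2), take $i=\frac{e}{2}$: this index is odd, so by the above $(\mbox{odd},\frac{e}{2})=(\mbox{even},-\frac{e}{2})=(\mbox{even},\frac{e}{2})$, and combining with~(1) (where $\delta_{e/2,0}=0$ since $e\geq2$) gives $2(\mbox{odd},\frac{e}{2})=f$, i.e. $(\mbox{odd},\frac{e}{2})=(\mbox{even},\frac{e}{2})=\frac{f}{2}$. In case~(3), take $i=0$: then $2\nmid 0+\frac{e}{2}$, so $(\mbox{odd},0)=(\mbox{even},0)$, and combining with~(1) (where $\delta_{0,0}=1$) gives $2(\mbox{odd},0)=f-1$, i.e. $(\mbox{odd},0)=(\mbox{even},0)=\frac{f-1}{2}$. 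The only place requiring genuine care is the parity bookkeeping through the two reindexings: one must use that negation on $\Z_e$ preserves parity (because $e$ is even) and that translation by $c\in\Z_e$ preserves parity iff $c$ is even, so the extra shift by $\frac{e}{2}$ coming from Lemma~\ref{thm-cyclotomy}(2) in the case $2\nmid f$ is exactly what turns the parity condition into the hypothesis $2\mid i+\frac{e}{2}$. Once these are pinned down, the rest is routine.
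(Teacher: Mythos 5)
Your proof is correct and follows essentially the same route as the paper: both derive everything from the relations $(i,j)=(-i,j-i)$ and the $f$-parity symmetry of Lemma~\ref{thm-cyclotomy}, chaining them so that the second coordinate becomes the constant $-i$ while the first coordinate sweeps one full parity class, and both handle the ``Particularly'' cases by combining the resulting self-symmetry at $i=\frac{e}{2}$ (resp.\ $i=0$) with part (1). The only difference is the order in which the two identities are applied, which does not change the substance of the argument.
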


\begin{proof}
(1) By Lemma \ref{thm-cyclotomy} (3),
$(\mbox{odd},i)+(\mbox{even},i)=\sum^{e-1}\limits_{j=0}(j,i)=f-\delta_{i,0}.$

(2) Assume that $ 2 \mid f.$ We have $(i,j)=(j,i).$ Then
$(\mbox{odd},i)=(i,\mbox{odd}),(\mbox{even},i)=(i,\mbox{even}).$
From $(i,j)=(-i,j-i)$ we get
\begin{eqnarray*}
(\mbox{odd},i)&=&(i,\mbox{odd})=\sum_{2 \nmid j}(i,j)=\sum_{2 \nmid j}(-i,j-i)=\sum_{2 \nmid j}(j-i,-i) \\
&=&
 \left \{
\begin{array}{ll}
(\mbox{odd},-i), & \mbox{if} \ 2 \mid i, \\
(\mbox{even},-i), &  \mbox{if} \ 2 \nmid i.
\end{array}
\right.
\end{eqnarray*}

Similarly,
$
(\mbox{even},i)=  \left \{
\begin{array}{ll}
(\mbox{even},-i), & \mbox{if} \ 2 \mid i, \\
(\mbox{odd},-i), &  \mbox{if} \ 2 \nmid i.
\end{array}
\right.
$

If $ e \equiv 2 \  (\bmod ~4),$ then $\frac{e}{2}$ is odd and
$(\mbox{odd},\frac{e}{2}) = (\mbox{even},-\frac{e}{2})=(\mbox{even},\frac{e}{2}).$
But $(\mbox{odd},\frac{e}{2}) + (\mbox{even},\frac{e}{2})=f,$ we get
$(\mbox{odd},\frac{e}{2}) = (\mbox{even},\frac{e}{2})=\frac{f}{2}.$

(3)  Assume that $ 2 \nmid f.$ From Lemma \ref{thm-cyclotomy}, we get
$$(j,i)=(i+\frac{e}{2}, j+\frac{e}{2})=(-(i+\frac{e}{2}),j-i)=(j-(i+\frac{e}{2}),-i).$$

Therefore, if $ 2 \nmid i+\frac{e}{2}, $ then
$$(\mbox{odd},i)=\sum_{2\nmid j}(j,i)=\sum_{2\nmid j}(j-(i+\frac{e}{2}),-i)=(\mbox{even},-i).$$

Similarly, if $ 2 \mid i+\frac{e}{2}, $ then
$(\mbox{odd},i)=(\mbox{odd},-i)$ and $(\mbox{even},i)=(\mbox{even},-i).$ If $ e \equiv 2 \  (\bmod ~4),$
then $ 2 \nmid \frac{e}{2}$ and $(\mbox{odd},0)=(\mbox{even},0).$ But
$(\mbox{odd},0) + (\mbox{even},0)=f-1.$ Therefore $(\mbox{odd},0) = (\mbox{even},0)=\frac{f-1}{2}.$
\end{proof}

\section{Examples}\label{sec-four}

After above preparation, now we show several results on the length of MDS self-dual codes as applications of
Theorem \ref{thm-six} and Lemma \ref{thm-three}. It is known that if
$ q \equiv 3 \  (\bmod ~4),$ and $ n \equiv 2 \  (\bmod ~4),$ then $n \notin \Sigma(q).$
Thus if $ q \equiv 3 \  (\bmod ~4),$ we consider the case $n=lf+a \ (a=0,1,2)$ with
$ n \not\equiv 2 \  (\bmod ~4).$ Firstly, we consider the case $l=|I|=1$ or $2.$

\begin{theorem}\label{thm-case1}
Let $q=p^m$ be a power of an odd prime $p, q-1=ef, 2 \mid e \geq 2.$

(1) If $2 \mid f,$ then $f  \in \Sigma(g,q).$ Moreover, if
$(\mbox{odd},0)$ is even, then $f+2 \in \Sigma(eg,q).$
Particularly, if $ e \equiv 2 \  (\bmod ~4)$ and $ f \equiv 0 \  (\bmod ~4),$ then
$f+2 \in \Sigma(eg,q).$

(2) If $2 \mid f,$ then $2f+2 \in \Sigma(eg,q).$ Moreover, if $4 \mid e,$ then $2f \in \Sigma(eg,q).$

(3) If $2 \nmid f.$  If
$(\mbox{odd},0)  \equiv   \frac{e}{2} \ (\bmod ~2),$  then $f+1 \in \Sigma(g,q)$ and  $f+1 \in \Sigma(eg,q).$
Particularly, if  $ e \equiv 2 \  (\bmod ~4)$ and $ f \equiv 3 \  (\bmod ~4),$ then
$f+1 \in \Sigma(g,q) \cap \Sigma(eg,q).$

(4) If $2 \nmid f$ and there exists $i$ such that
$1 \equiv i \equiv \frac{e}{2} \  (\bmod ~2)$ and $ (\mbox{odd},0) \equiv (\mbox{odd},i) \  (\bmod ~2),$
then $2f+2 \in \Sigma(eg,q).$
\end{theorem}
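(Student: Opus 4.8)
The plan is to specialize Theorem \ref{thm-six} to the cases $l=|I|=1$ (parts (1)–(3)) and $l=|I|=2$ (parts (2), (4)), and in each case choose a concrete subset $I\subseteq\Z_e$ for which the parity conditions in Theorem \ref{thm-six} can be checked using the cyclotomic identities in Lemma \ref{thm-two} and Lemma \ref{thm-cyclotomy}. For part (1), I would take $I=\{0\}$, so $l=1$, $|\cS|=f$ is even, and $I-i=\{0\}$ for the single index $i=0$. Then $(\mathrm{odd},I-i)=(\mathrm{odd},0)$ and $i=0$, so the single-index conditions of Theorem \ref{thm-six} Case 1 trivially hold (a statement about "all $i\in I$" is automatic when $|I|=1$): $f\in\Sigma(g,q)$ follows from (1.1), and $f+2\in\Sigma(eg,q)$ follows from (1.2) once $(\mathrm{odd},0)$ is even. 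The "particularly" clause then follows: when $e\equiv 2\pmod 4$ and $4\mid f$, Lemma \ref{thm-two}(2) gives $(\mathrm{odd},\frac{e}{2})=\frac f2$, but more directly we want $(\mathrm{odd},0)$; here I should instead invoke $(\mathrm{odd},0)+(\mathrm{even},0)=f-1$ together with... — this is the step to watch, see below.

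For part (3), again $I=\{0\}$, $l=1$, but now $f$ is odd so $fl=f$ is odd and we are in Case 3 of Theorem \ref{thm-six}. Condition (3.1) with $i=0$ reads $(\mathrm{odd},0)\equiv\frac e2\pmod 2$, giving $f+1\in\Sigma(eg,q)$; condition (3.2) with $i=0$ and $|I_{\mathrm{odd}}|=0$ reads $(\mathrm{odd},0)\equiv\frac e2\pmod 2$ as well, giving $f+1\in\Sigma(g,q)$. So both membership statements follow from the single hypothesis. The "particularly" clause uses Lemma \ref{thm-two}(3): when $e\equiv 2\pmod 4$ we have $(\mathrm{odd},0)=(\mathrm{even},0)=\frac{f-1}{2}$, and one checks $\frac{f-1}{2}\equiv\frac e2\pmod 2$ precisely when $f\equiv 3\pmod 4$ (since $\frac e2$ is odd here). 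For part (2), I would take $I=\{0,1\}$ (or more generally any pair $\{i_0,i_0+1\}$), so $l=2$, $|\cS|=2f$. When $2\mid f$ we are in Case 1: for (1.2) I need $(\mathrm{odd},I-i)$ even for both $i\in\{0,1\}$, i.e. $(\mathrm{odd},\{0,1\})$ and $(\mathrm{odd},\{-1,0\})$ even; by Lemma \ref{thm-two}(1) and (2) these sums telescope to something I can evaluate — $(\mathrm{odd},0)+(\mathrm{odd},\pm 1)$, and using $(\mathrm{odd},1)=(\mathrm{even},-1)$ and $(\mathrm{odd},-1)=(\mathrm{even},1)$ plus $(\mathrm{odd},\pm1)+(\mathrm{even},\pm1)=f$ even, one gets that $(\mathrm{odd},\{0,1\})\equiv(\mathrm{odd},0)+(\mathrm{odd},1)$ and $(\mathrm{odd},\{-1,0\})\equiv(\mathrm{odd},0)+(\mathrm{even},1)$, whose sum is $(\mathrm{odd},1)+(\mathrm{even},1)=f\equiv 0$, so the two parities agree; pushing this through should yield $2f+2\in\Sigma(eg,q)$ unconditionally. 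The refinement $2f\in\Sigma(eg,q)$ when $4\mid e$ comes from the extra symmetry available when $\frac e2$ is even, which forces $|I_{\mathrm{odd}}|\equiv\frac e2\pmod 2$ (Case 2-type bookkeeping) — here I'd note $|I_{\mathrm{odd}}|=|\{0,1\}_{\mathrm{odd}}|=1$ and need $\frac e2$ odd, so actually the refinement should use a different two-element set; I would instead choose $I=\{0,\frac e2\}$ or similar so that $|I_{\mathrm{odd}}|$ has the right parity, and re-run condition (1.2)/(2.2).

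For part (4), with $2\nmid f$ and $l=2$ we are in Case 2 of Theorem \ref{thm-six}: I would take $I=\{0,i\}$ where $i$ is the given index with $i\equiv 1\pmod 2$ and $i\equiv\frac e2\pmod 2$, so that $|I_{\mathrm{odd}}|=2\equiv 0\equiv\frac e2\pmod 2$ (using $\frac e2$ odd), fulfilling the first half of condition (2.2). The second half, $(\mathrm{odd},I-i')\equiv 0\pmod 2$ for both $i'\in\{0,i\}$, unwinds to $(\mathrm{odd},\{0,-i\})\equiv 0$ and $(\mathrm{odd},\{i,0\})\equiv 0$; by Lemma \ref{thm-two}(3), since $i$ is odd and $\frac e2$ is odd, $i+\frac e2$ is even so $(\mathrm{odd},i)=(\mathrm{odd},-i)$, whence both sums equal $(\mathrm{odd},0)+(\mathrm{odd},i)$ mod $2$, which is even exactly under the stated hypothesis $(\mathrm{odd},0)\equiv(\mathrm{odd},i)\pmod 2$. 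Then Theorem \ref{thm-six}(2.2) gives $fl+2=2f+2\in\Sigma(eg,q)$.

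The main obstacle I anticipate is purely bookkeeping of parities: correctly reducing the two-element sums $(\mathrm{odd},I-i)$ via the reflection identities in Lemma \ref{thm-two}, and — crucially — choosing the right translate $I=\{i_0,i_0+1\}$ versus $I=\{0,\frac e2\}$ etc. so that both the "all $i\in I$" conditions on $(\mathrm{odd},I-i)$ \emph{and} the condition on $|I_{\mathrm{odd}}|$ are simultaneously met in the refined cases $4\mid e$ in (2) and the conditional case in (4). There is no deep difficulty, but the translation-invariance of $\varphi_q(\Delta_{\cS}(a))$ is not literal (translating $I$ changes $I_{\mathrm{odd}}$ and the $i$-terms), so one must be careful that the chosen $I$ genuinely realizes the length $lf$, $lf+1$, or $lf+2$ claimed, and that the parity of $|I_{\mathrm{odd}}|$ is tracked separately from the parity of the cyclotomic sums.
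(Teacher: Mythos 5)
Your overall strategy (specialize Theorem \ref{thm-six} to $l=1,2$ and check the parity conditions via Lemma \ref{thm-two}) is the paper's strategy, and your treatments of parts (3) and (4) are essentially the paper's proof (in (4) note the small slip: for $I=\{0,i\}$ with $i$ odd you have $|I_{\mbox{odd}}|=1$, and the check is $1\equiv\frac{e}{2}\pmod 2$ because $\frac{e}{2}$ is odd, not ``$2\equiv 0\equiv\frac{e}{2}$''). However, there is a genuine gap, and it is exactly the ingredient you flagged and never supplied: a way to pin down the parity of $(\mbox{odd},0)$, resp.\ of $(\mbox{odd},0)+(\mbox{odd},\frac{e}{2})$. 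The paper's key step, absent from your proposal, is the global identity obtained by summing over all classes: since $2\mid f$ gives $(\mbox{odd},i)\equiv(\mbox{odd},-i)\pmod 2$ for every $i$, the pairs $i,-i$ cancel and $\sum_{i=0}^{e-1}(\mbox{odd},i)\equiv(\mbox{odd},0)+(\mbox{odd},\frac{e}{2})\pmod 2$; on the other hand this total counts $\sharp\{x\in\cD^{(2)}_1: x+1\neq 0\}$, which is $\frac{q-1}{2}$ (or $\frac{q-3}{2}$), hence even in the relevant situations. Combined with $(\mbox{odd},\frac{e}{2})=\frac{f}{2}$ from Lemma \ref{thm-two}(2), this is what proves the ``particularly'' clause of (1) (under $e\equiv 2\pmod 4$, $4\mid f$ one gets $2\mid(\mbox{odd},0)$); your proposal leaves that clause unproven, since $(\mbox{odd},0)+(\mbox{even},0)=f-1$ alone says nothing about the parity of $(\mbox{odd},0)$.

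The same missing identity breaks your part (2). With $I=\{0,1\}$ your reflection argument only shows that $(\mbox{odd},\{0,1\})$ and $(\mbox{odd},\{-1,0\})$ have \emph{equal} parity, whereas Theorem \ref{thm-six}(1.2) requires both to be \emph{even}; ``pushing this through'' does not make them even. Concretely, for $q=9$, $e=4$, $f=2$ one has (Table I with $s=-3$, $t=0$) $(\mbox{odd},0)=0$ and $(\mbox{odd},1)=1$, so $(\mbox{odd},I-0)$ is odd and your choice of $I$ fails, even though $2f+2\in\Sigma(eg,q)$ does hold. The paper instead takes $I=\{0,\frac{e}{2}\}$ and uses the global-sum argument above to get $(\mbox{odd},0)\equiv(\mbox{odd},\frac{e}{2})\pmod 2$, so that both $(\mbox{odd},I-0)$ and $(\mbox{odd},I-\frac{e}{2})$ are even, giving $2f+2\in\Sigma(eg,q)$ unconditionally; the ``moreover'' clause for $4\mid e$ then comes from the same set $I=\{0,\frac{e}{2}\}$ (the shift by $i=\frac{e}{2}$ is even precisely when $4\mid e$). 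You do mention switching to $I=\{0,\frac{e}{2}\}$, but only for the refinement and without the parity fact $(\mbox{odd},0)\equiv(\mbox{odd},\frac{e}{2})\pmod 2$ that makes either claim work, so parts (1)(second half) and (2) of your proposal are incomplete as written.
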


\begin{proof}
(1) Suppose that $2 \mid f.$ Take $I=\{0\},$ then $l=|I|=1$ and
$(\mbox{odd},I-0)=(\mbox{odd},0-0)=(\mbox{odd},0).$ From Theorem \ref{thm-six} (1.1) and (1.2),
we get $f \in \Sigma(g,q)$ and if $ 2 \mid (\mbox{odd},0),$ then $f+2 \in \Sigma(eg,q).$
Moreover, by Lemma \ref{thm-two} (1) and (3) we know that
$$
(\mbox{odd},i)=  \left \{
\begin{array}{ll}
(\mbox{odd},-i), & \mbox{if} \ 2 \mid i, \\
(\mbox{even},-i)=f-(\mbox{odd},-i) \equiv  (\mbox{odd},-i) \  (\bmod ~2), & \mbox{if} \ 2 \nmid i.
\end{array}
\right.
$$
Therefore
\begin{eqnarray*}
\sum^{e-1}_{i=0}(\mbox{odd},i) &=& (\mbox{odd},0)+(\mbox{odd},\frac{e}{2})+\sum^{\frac{e}{2}-1}_{i=1}((\mbox{odd},i)+(\mbox{odd},-i)) \\
& \equiv & (\mbox{odd},0)+(\mbox{odd},\frac{e}{2}) \  (\bmod ~2).
\end{eqnarray*}

On the other hand,
$$
\sum^{e-1}\limits_{i=0}(\mbox{odd},i)=\sharp \{ x \in \cD^{(2)}_1 : x+1 \neq 0 \}=  \left \{
\begin{array}{ll}
\frac{q-1}{2}, & \mbox{if} \  q \equiv 1 \  (\bmod ~4), \\
\frac{q-3}{2}, & \mbox{if} \ q \equiv  3 \  (\bmod ~4).
\end{array}
\right.
$$
If  $ e \equiv 2 \  (\bmod ~4)$ and $ f \equiv 0 \  (\bmod ~4),$ then
$ q \equiv 1 \  (\bmod ~8)$ and $  \sum^{e-1}\limits_{i=0}(\mbox{odd},i)=\frac{q-1}{2} \equiv 0 \  (\bmod ~2).$
Therefore $(\mbox{odd},0) \equiv (\mbox{odd},\frac{e}{2}) \  (\bmod ~2) .$ But from Lemma \ref{thm-two} (2),
$(\mbox{odd},\frac{e}{2})=\frac{f}{2} \equiv 0 \  (\bmod ~2),$ we get $2 \mid (\mbox{odd},0)$
and $f+2 \in \Sigma(eg,q).$

(2)  Assume that $2 \mid f.$ By Lemma \ref{thm-two} (2), we have
$$
(\mbox{odd},i)=  \left \{
\begin{array}{ll}
(\mbox{odd},-i), & \mbox{if} \ 2 \mid i, \\
(\mbox{even},-i)=f-(\mbox{odd},-i) \equiv  (\mbox{odd},-i) \  (\bmod ~2), & \mbox{if} \ 2 \nmid i.
\end{array}
\right.
$$

Therefore $(\mbox{odd},i) \equiv (\mbox{odd},-i) \  (\bmod ~2)$ for any $i.$ Then we get
\begin{eqnarray*}
\sum^{e-1}_{i=0}(\mbox{odd},i) &=& (\mbox{odd},0)+(\mbox{odd},\frac{e}{2})+\sum^{\frac{e}{2}-1}_{i=1}((\mbox{odd},i)+(\mbox{odd},-i)) \\
& \equiv & (\mbox{odd},0)+(\mbox{odd},\frac{e}{2}) \  (\bmod ~2).
\end{eqnarray*}
But $\sum\limits^{e-1}_{i=0}(\mbox{odd},i)=f \equiv  0 \  (\bmod ~2).$ Therefore
$(\mbox{odd},0) \equiv  (\mbox{odd},\frac{e}{2}) \  (\bmod ~2).$ Take $I=\{0,\frac{e}{2}\},$ then $ l=|I|=2,$
and $(\mbox{odd},I-0)=(\mbox{odd},I-\frac{e}{2})=(\mbox{odd},0)+(\mbox{odd},\frac{e}{2}) \equiv  0 \  (\bmod ~2).$
From Theorem \ref{thm-six} (1.2), we get $2f+2 \in \Sigma(eg,q).$ Moreover, if $ 4 \mid e,$ then $2f \in \Sigma(eg,q).$

(3) Suppose that $2 \nmid f,$ we also take $I=\{ 0 \}, l=1.$ From Theorem \ref{thm-six} case 3, we know that if
$(\mbox{odd},0) \equiv \frac{e}{2} \  (\bmod ~2),$ then $f+1 \in \Sigma(g,q)$ and  $f+1 \in \Sigma(eg,q).$
Moreover, if $ e \equiv 2 \  (\bmod ~4)$ and $ f \equiv 3 \  (\bmod ~4),$ then
$(\mbox{odd},0)=\frac{f-1}{2} \equiv 1 \equiv \frac{e}{2} \  (\bmod ~2),$
we get $f+1 \in \Sigma(g,q)$ and $f+1 \in \Sigma(eg,q).$

(4) Assume that $2 \nmid f.$ By Lemma \ref{thm-two} (3), we have
$ (\mbox{odd},i)=(\mbox{odd},-i)$ if $2 \mid i+\frac{e}{2}.$ If there exists
$i, 1 \equiv i \ \equiv \frac{e}{2} (\bmod ~2) $
such that $ (\mbox{odd},0) \equiv (\mbox{odd},2) \  (\bmod ~2),$ we take $I=\{0,i\},$
then $|I_{\mbox{odd}}| =1 \equiv \frac{e}{2} \  (\bmod ~2)$ and
$(\mbox{odd},I-0)=(\mbox{odd},0)+(\mbox{odd},i) \equiv 0  \equiv  (\mbox{odd},0)+(\mbox{odd},-i)
=(\mbox{odd},I-i)  \  (\bmod ~2).$ By Theorem \ref{thm-six} (2.2), we get $2f+2 \in \Sigma(eg,q).$
\end{proof}

Next we consider the semiprimitive case.

\begin{definition}
Let $p$ be a prime, $p \nmid e \geq 2. \ p$ is called semiprimitive module $e$ if
there exists a positive integer $t$ such that $p^t \equiv -1 \  (\bmod ~e).$

From now on, we take $t$ to be the least positive integer such that
$p^t \equiv -1 \  (\bmod ~e).$ Then the order of $p$ module $e$ is $2t.$
\end{definition}

\begin{lemma}
Let $ 2 \mid e \geq 4, p$ be a semiprimitive prime module $e$ and $t$ be the least
positive integer such that $p^t \equiv -1 \  (\bmod ~e).$
Let $r=p^m, q=r^2,m=ts, q-1=ef, R=r(-1)^s, \eta=\frac{R-1}{e}.$ Then,

(1) $2 \mid f$ and $\eta \in \gZ ;$

(2) Let $(i,j)=(i,j)_e \ (0 \leq i,j \leq e-1)$ be the cyclotomic numbers of order $e$
on $\F_q.$ Then $(\mbox{odd}, 0)$ is even and for $1 \leq i \leq e-1,$
$$
(\mbox{odd},i)=  \left \{
\begin{array}{ll}
\frac{R-1}{2} \  (\bmod ~2), & \mbox{if} \ 2 \mid i, \\
\frac{R-1}{2}+\eta  \ (\bmod ~2), & \mbox{if} \ 2 \nmid i.
\end{array}
\right.
$$
\end{lemma}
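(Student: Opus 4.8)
The plan is to settle (1) by a short congruence and (2) by writing the sums $(\mathrm{odd},i)$ as Jacobi sums, feeding in the classical evaluation of Gauss sums in the \emph{semiprimitive} case, and then reducing modulo $2$; the last step is where the real work lies. For (1): from $p^{t}\equiv-1\pmod e$ and $m=ts$ we get $r=p^{m}=(p^{t})^{s}\equiv(-1)^{s}\pmod e$, hence $R=r(-1)^{s}\equiv1\pmod e$, so $e\mid R-1$ and $\eta=\tfrac{R-1}{e}\in\Z$. Since $R=\pm r$ we have $q-1=r^{2}-1=R^{2}-1$, whence $f=\tfrac{q-1}{e}=\eta(R+1)$; as $R$ is odd, $R+1$ is even, so $2\mid f$.

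For (2) I would first record two structural facts. Because $2\mid f$, we have $\tfrac{q-1}{2}=e\cdot\tfrac{f}{2}\equiv0\pmod e$, so $-1=\theta^{(q-1)/2}\in\cD_{0}$ and therefore $\psi(-1)=1$ for every multiplicative character $\psi$ of $\F_{q}^{\ast}$ of order dividing $e$; also, since $e$ is even, the quadratic character is $\eta_{q}=\chi^{e/2}$ for a fixed order-$e$ character $\chi$, and $\bigcup_{2\nmid j}\cD_{j}$ is exactly the set of nonsquares of $\F_{q}^{\ast}$, so $(\mathrm{odd},i)=\#\{x\in\F_{q}^{\ast}\setminus\{-1\}:\eta_{q}(x)=-1,\ x+1\in\cD_{i}\}$. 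Next set $\Delta_{i}=(\mathrm{odd},i)-(\mathrm{even},i)$; since membership of $x$ in an odd-index (resp.\ even-index) $e$-th cyclotomic class is the same as $\eta_{q}(x)=-1$ (resp.\ $+1$), one gets $\Delta_{i}=-\sum_{x+1\in\cD_{i}}\eta_{q}(x)$, and expanding the condition $x+1\in\cD_{i}$ by character orthogonality turns $\Delta_{i}$ into a linear combination of the Jacobi sums $J(\eta_{q},\chi^{b})$, $0\le b\le e-1$. Combining with Lemma \ref{thm-two}(1), which reads $(\mathrm{odd},i)+(\mathrm{even},i)=f-\delta_{i,0}$, gives $(\mathrm{odd},i)=\tfrac12\bigl(f-\delta_{i,0}+\Delta_{i}\bigr)$, so it suffices to evaluate these Jacobi sums and read off the right-hand side modulo $4$.

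The evaluation is standard: $J(\eta_{q},\chi^{0})=-1$ and $J(\eta_{q},\chi^{e/2})=J(\eta_{q},\eta_{q})=-\eta_{q}(-1)=-1$, while for $b\notin\{0,\tfrac{e}{2}\}$ the characters $\chi^{e/2},\chi^{b},\chi^{b+e/2}$ are all nontrivial and $J(\eta_{q},\chi^{b})=g(\eta_{q})g(\chi^{b})/g(\chi^{b+e/2})$. Here $g(\eta_{q})=\pm\sqrt q$ by Hasse--Davenport, and since $p^{t}\equiv-1\pmod e$ forces $p^{t}\equiv-1$ modulo the order of $\chi^{b}$ for every $b\neq0$, each $g(\chi^{b})$ with $b\notin\{0,e/2\}$ is a semiprimitive Gauss sum; the classical formula then gives $g(\chi^{b})=\varepsilon_{b}\sqrt q$ with $\varepsilon_{b}=\pm1$ a sign depending only on $s$, on the order $e/\gcd(b,e)$, and on divisibility data built from $p^{t}+1$. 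In particular $\varepsilon_{b}$ is constant on the $\langle p\rangle$-orbit of $b$ (since $g(\chi^{pb})=g(\chi^{b})$ always), and that orbit is stable under $b\mapsto-b$ because $-1\in\langle p\rangle\bmod e$. Hence each $J(\eta_{q},\chi^{b})$ equals $\pm\sqrt q$ with a controlled sign, the ensuing sum over $b$ in $\Delta_{i}$ is real and folds into $\tfrac{e}{2}-1$ pairs, and substituting this into $(\mathrm{odd},i)=\tfrac12(f-\delta_{i,0}+\Delta_{i})$, writing $\sqrt q=|R|$, using $R=r(-1)^{s}$ and $\eta=\tfrac{R-1}{e}$, and reducing modulo $2$ separately for $i=0$, for $2\mid i\neq0$, and for $2\nmid i$, should yield ``$(\mathrm{odd},0)$ even'' together with the two congruences in the statement.

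The main obstacle is precisely this final reduction. One must carry the explicit signs of the semiprimitive Gauss sums through the ratio $g(\eta_{q})g(\chi^{b})/g(\chi^{b+e/2})$ — where most of the sign contributions should cancel, since $\chi^{b}$ and $\chi^{b+e/2}$ lie in the ``same'' semiprimitive configuration — and then through the division by $2$ in $(\mathrm{odd},i)=\tfrac12(f-\delta_{i,0}+\Delta_{i})$, so that only the dependence on $R$ and $\eta$ recorded in the lemma survives $\pmod2$. It is here that the parity of $(p^{t}+1)s/e$ (which is essentially $\eta\pmod2$) enters, and here that one must keep careful account of the $(-1)^{s}$ absorbed into $R=r(-1)^{s}$; the case $i=0$, where all the root-of-unity weights equal $1$, is what forces $(\mathrm{odd},0)$ to be even.
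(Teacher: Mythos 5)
Your part (1) is fine and is essentially the paper's argument ($r\equiv(-1)^s\pmod e$ gives $e\mid R-1$, and $f=\eta(R+1)$ with $R$ odd gives $2\mid f$). The problem is part (2): what you have written is a plan, not a proof, and the missing piece is exactly the mathematical content of the lemma. Your reduction $(\mbox{odd},i)=\tfrac12\bigl(f-\delta_{i,0}+\Delta_i\bigr)$ with $\Delta_i$ a weighted sum of Jacobi sums $J(\eta_q,\chi^b)$ is correct as far as it goes, but to get $(\mbox{odd},i)\bmod 2$ you need $\Delta_i\bmod 4$, i.e.\ the \emph{exact signs} $\varepsilon_b$ of all the semiprimitive Gauss sums $g(\chi^b)$, assembled through the ratios $g(\eta_q)g(\chi^b)/g(\chi^{b+e/2})$ and the root-of-unity weights $\zeta_e^{-bi}$. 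You explicitly defer this ("should yield", "the main obstacle is precisely this final reduction"), and the heuristics you offer in its place are not reliable: the orders of $\chi^{b}$ and $\chi^{b+e/2}$ are $e/\gcd(b,e)$ and $e/\gcd(b+e/2,e)$, which are different in general, so these two characters are not in the "same semiprimitive configuration", the relevant parameters (the minimal $j_b$ with $p^{j_b}\equiv-1$ modulo the order, and $\gamma_b=m/j_b$) vary with $b$, and the signs do not simply cancel. Indeed, if they did all cancel, the dependence on $\eta=\frac{R-1}{e}$ in the stated congruence for odd $i$ could not arise; the parity $\eta\bmod 2$ is precisely what the sign bookkeeping must produce, and it is nowhere produced in your write-up. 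So the proof has a genuine gap at its decisive step.

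Note also that carrying out your program to the end amounts to re-proving the uniform-cyclotomy theorem of Baumert and Mills for the semiprimitive case, which is exactly what the paper avoids: its proof simply quotes the known semiprimitive cyclotomic numbers
$(0,0)=\eta^2-(e-3)\eta-1$, $(0,i)=(i,0)=(i,i)=\eta^2+\eta$ for $1\le i\le e-1$, and $(i,j)=\eta^2$ for $1\le i\ne j\le e-1$,
after which part (2) is a two-line parity count: $(\mbox{odd},0)=\tfrac e2(\eta^2+\eta)\equiv 0$, $(\mbox{odd},i)=\tfrac e2\eta^2\equiv\tfrac e2\eta=\tfrac{R-1}2$ for even $i\ne 0$, and $(\mbox{odd},i)=(\eta^2+\eta)+(\tfrac e2-1)\eta^2\equiv\tfrac{R-1}2+\eta$ for odd $i$, all modulo $2$. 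Either cite that result (as the paper does) and finish with this computation, or, if you insist on the Gauss--Jacobi route, you must actually write down the semiprimitive sign formula for each $g(\chi^b)$ (with its dependence on $\gamma_b$ and $(p^{j_b}+1)$ divided by the order of $\chi^b$), perform the summation over $b$, and verify the mod-$4$ value of $\Delta_i$ in the three cases $i=0$, $2\mid i\ne 0$, $2\nmid i$; without that, the lemma is not established.
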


\begin{proof}
(1) $f=\frac{q-1}{e}=\frac{(r-1)(r+1)}{e}$ is even since $2 \nmid r$ and $r \equiv (-1)^s \  (\bmod ~e).$
Next, $R=p^{ts}(-1)^s \equiv (-1)^{s+s} \equiv 1 \  (\bmod ~e),$ we get $\eta=\frac{R-1}{e} \in \gZ.$

(2) For the semiprimitive case, the cyclotomic numbers have been determined in ([10], Lemma 5) as follows

$(0,0)=\eta^2 -(e-3)\eta -1, (0,i)=(i,0)=(i,i)=\eta^2 +\eta \ (1 \leq i \leq e-1)$,

$(i,j)=\eta^2 \ (1 \leq i \neq j \leq e-1).$

Then we get,
$$ (\mbox{odd}, 0) = \sum^{e-1}\limits_{\scriptstyle i=0
\atop \scriptstyle  2 \nmid i}(i,0)= \frac{e}{2}(\eta^2 +\eta)\equiv 0 (\bmod ~2),$$
and for $1 \leq i \leq e-1,$
$$
(\mbox{odd},i)= \sum^{e-1}\limits_{\scriptstyle j=1
\atop \scriptstyle  2 \nmid j}(j,i)=  \left \{
\begin{array}{ll}
\eta^2 \frac{e}{2} \equiv \eta \frac{e}{2} = \frac{R-1}{2} \  (\bmod ~2), & \mbox{if} \ 2 \mid i, \\
(i,i)+(\frac{e}{2}-1)\eta^2   \equiv  \frac{R-1}{2}+ \eta \ (\bmod ~2), & \mbox{if} \ 2 \nmid i.
\end{array}
\right.
$$
\end{proof}

\begin{theorem}\label{thm-semi}
Let $p$ be a semiprimitive prime module $e$ and $t$ be the least
positive integer such that $p^t \equiv -1 \  (\bmod ~e).$
Let $m=ts,r=p^m,q=r^2, q-1=ef, R=r(-1)^s, \eta=\frac{R-1}{e}.$

(1) If $2 \mid \eta,$ then $ lf \in \Sigma(g,q)$ for all $1 \leq l \leq \frac{e}{2}$
and $lf+2 \in \Sigma(eg,q)$ for all $1 \leq l \leq e.$

(2) If $2 \nmid \eta$ and $ 4 \mid e,$ then $ lf \in \Sigma(g,q)$ for `` all odd $l, 1 \leq l \leq e$ ''
and `` all even $l, 2 \leq l \leq  \frac{e}{2}$'', and $lf+2 \in \Sigma(eg,q)$ for `` all even $2 \leq l \leq e$ ''
and `` all odd $l,  1 \leq l \leq  \frac{e}{2}-1$''.

(3) If $2 \nmid \eta$ and $e \equiv 2 \  (\bmod ~4),$ then $fl \in \Sigma(g,q)$ for
`` all odd $l, \leq l \leq e-1$ '' and `` all even $2 \leq l \leq  \frac{e}{2}-1$'',
and $lf+2 \in \Sigma(eg,q)$ for `` all even $l,2 \leq l \leq e$ ''.
\end{theorem}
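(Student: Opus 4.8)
\emph{Approach.} The plan is to push the parity information from the Lemma just above through Theorem \ref{thm-six}. Because the semiprimitive hypotheses force $2\mid f$, only \emph{Case} 1 of that theorem is ever invoked: its part (1.1) yields the memberships in $\Sigma(g,q)$ and its part (1.2) those in $\Sigma(eg,q)$. So the whole argument reduces to exhibiting, for each target $l$, a subset $I\subseteq\Z_e$ with $|I|=l$ for which the relevant parity condition holds.

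First I would record the parity of $(\mbox{odd},k)$ for \emph{every} residue $k\in\Z_e$. Writing $\tfrac{R-1}{2}=\tfrac e2\eta$, the Lemma gives $(\mbox{odd},0)$ even and, for $1\le i\le e-1$, $(\mbox{odd},i)\equiv\tfrac e2\eta\pmod 2$ when $2\mid i$ and $(\mbox{odd},i)\equiv\tfrac e2\eta+\eta\pmod 2$ when $2\nmid i$. Reading this off in the three regimes of the statement: in case (1) ($2\mid\eta$) every $(\mbox{odd},k)$ is even; in case (2) ($2\nmid\eta$, $4\mid e$) one has $(\mbox{odd},k)\equiv k\pmod 2$ for all $k$; and in case (3) ($2\nmid\eta$, $e\equiv 2\pmod 4$) $(\mbox{odd},k)$ is odd precisely when $k$ is a nonzero even residue.

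Next I would evaluate $(\mbox{odd},I-i)\pmod 2$ for a candidate $I$. Since $e$ is even, for $i,j\in\{0,\dots,e-1\}$ the residue $j-i$ is even iff $j\equiv i\pmod 2$ and equals $0$ iff $j=i$, so the additive identity $(\mbox{odd},I-i)=\sum_{j\in I}(\mbox{odd},j-i)$ collapses modulo $2$ to an expression in $l$, $|I_{\mbox{odd}}|$ and the parity of $i$: it is $0$ in case (1) (whence $\varphi_q(\Delta_{\cS}(a))\equiv i$ for $a\in\cD_i$), it is $|I_{\mbox{odd}}|+li$ in case (2), and it is $(l-|I_{\mbox{odd}}|)-1$ for even $i\in I$ and $|I_{\mbox{odd}}|-1$ for odd $i\in I$ in case (3). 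Plugging these into Theorem \ref{thm-six} (1.1) and (1.2) turns ``self-dual'' into a numerical congruence in $l$ and $|I_{\mbox{odd}}|$: a \emph{monochromatic} $I$ (all of its residues of a single parity) always satisfies the ``same for all $i\in I$'' requirement of (1.1), at the price of $l\le\tfrac e2$; for those $l$ at which $\varphi_q(\Delta_{\cS}(\cdot))$ turns out constant on $\cS$ no matter how $I$ is chosen — the odd values of $l$ in cases (2) and (3) — an arbitrary $I$ of that size is admissible; and for (1.2) one instead prescribes the parities of $|I_{\mbox{odd}}|$ and of $l-|I_{\mbox{odd}}|$.

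Finally I would do the counting. Since $\Z_e$ contains $\tfrac e2$ even and $\tfrac e2$ odd residues, a splitting $l=(l-|I_{\mbox{odd}}|)+|I_{\mbox{odd}}|$ with the parities demanded by the chosen condition exists for precisely the ranges of $l$ listed in (1), (2) and (3); in particular, the replacement of $\tfrac e2$ by $\tfrac e2-1$ for the even-$l$ ranges in (3) is exactly the fact that $\tfrac e2$ is odd when $e\equiv 2\pmod 4$, and the analogous shrinkage for odd $l$ in (2) reflects $\tfrac e2$ being even when $4\mid e$. I expect the only real obstacle to be this bookkeeping: for each admissible $l$ one must decide whether a monochromatic $I$ or an $I$ meeting both parities is required, verify that a split with $0\le l-|I_{\mbox{odd}}|\le\tfrac e2$ and $0\le|I_{\mbox{odd}}|\le\tfrac e2$ of the right parities exists, and treat the endpoints $l$ near $\tfrac e2$ and $l$ near $e$ separately. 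Producing the three $(\mbox{odd},k)$-patterns and the $(\mbox{odd},I-i)$ formulas used above is short and routine by contrast.
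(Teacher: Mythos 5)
Your proposal is correct and follows essentially the same route as the paper: since $2\mid f$ only Case 1 of Theorem \ref{thm-six} is needed, the semiprimitive lemma gives exactly the three parity patterns of $(\mbox{odd},k)$ you list, your reductions of $(\mbox{odd},I-i)$ to expressions in $l$, $|I_{\mbox{odd}}|$ and the parity of $i$ agree with the paper's (which writes $A=l-|I_{\mbox{odd}}|$, $B=|I_{\mbox{odd}}|$), and the remaining step is the same choice-of-$I$ bookkeeping the paper carries out case by case. The only difference is cosmetic: you leave that final counting as routine (it is), and for case (1) with $l>\frac e2$ you correctly note that an arbitrary $I$ works for the $\Sigma(eg,q)$ claim, a point the paper glosses over by taking $I\subseteq 2\Z_e$.
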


\begin{proof}Remark that $f$ is even (Lemma 9 (1)).

(1) If $ 2 \mid \eta =\frac{R-1}{e},$ then $2 \mid \frac{R-1}{2}$ and by Lemma 9,
$(\mbox{odd},i) \equiv 0 (\bmod ~2)$ for all $1 \leq i \leq e.$ For any $l,$ $1 \leq l \leq \frac{e}{2},$
we take a subset $I$ of $2 \gZ_e = \{0,2,4,\ldots,2e-2\}$ with size $|I|=l.$ For each $i \in I,$
$ i+ (\mbox{odd},I-i) \equiv \sum\limits_{j \in I}(\mbox{odd},j-i) \equiv 0 (\bmod ~2).$
By Theorem \ref{thm-six} (1.1)and (1.2),
we get $fl \in \Sigma(g,q)$ and $fl+2 \in \Sigma(eg,q)$ respectively.

(2) If $2 \nmid \eta=\frac{R-1}{e}$ and $4 \mid e,$ then $ 2 \mid \frac{R-1}{2}$ and for all $1 \leq i \leq e,$
$(\mbox{odd},i) \equiv i (\bmod ~2)$ (Lemma 9). Let $ 1 \leq l \leq e-1$ and $I$ be a subset of $\{1,2,\ldots,e\}$
with size $|I|=l.$ Then for each $i \in I,$
$$i+(\mbox{odd},I-i)=i+\sum_{j \in I}(\mbox{odd},j-i) \equiv i+\sum_{j \in I}(j+i) \equiv (l+1)i +\sum_{j \in I} j \ (\bmod ~2).$$
If $ 2 \nmid l,$ then $i+(\mbox{odd},I-i) \equiv \sum_{j \in I}j (\bmod ~2)$ are the same for all $i \in I.$ By Theorem \ref{thm-six} (1.1),
we get $ lf \in \Sigma(g,q).$ If $ 2 \mid l,$ we also take $I \subseteq 2 \gZ_e, |I|=l, 2 \leq l \leq \frac{e}{2},$ then for all $i \in I,$
$$i+(\mbox{odd},I-i)=i+\sum_{j \in I}j \equiv \sum_{j \in I}j  \ (\bmod ~2)$$
are the same for all $i \in I.$ By Theorem \ref{thm-six} (1,1), we get $ lf \in \Sigma(g,q).$

On the other hand, for each $I \subseteq \gZ_e, |I|=l,(\mbox{odd},I-i) \equiv li+ \sum_{j \in I}j (\bmod ~2).$
If $2 \mid l, 2 \leq l \leq e,$ it is easy to see that we have a subset $I$ of $\gZ_e$ such that $|I|=l$ and
$\sum\limits_{j \in I}j \equiv 0 \ (\bmod ~2).$ Therefore $(\mbox{odd},I-i) \equiv 0 \  (\bmod ~2)$ for all $i \in I.$
By Theorem \ref{thm-six} (1.2), we get $lf+2 \in \Sigma(eg,q).$

If $ 2 \nmid l$ and $1 \leq l \leq \frac{e}{2}-1, $ we take a subset $I$ of $2\gZ_e$ with size $|I|=l.$
We also have $(\mbox{odd},I-i) \equiv i+\sum_{j \in I}j \equiv 0 \  (\bmod ~2)$ for all $i \in I.$ Then we get
$lf+2 \in \Sigma(eg,q)$ by Theorem \ref{thm-six} (1.2).

(3) If $2 \nmid \eta=\frac{R-1}{e}$ and $ e \equiv 2 \  (\bmod ~4),$ then $ 2 \nmid \frac{R-1}{2}.$
By Lemma 9, $ 2 \mid (\mbox{odd},0)$ and $ (\mbox{odd},i) \equiv i+1 \  (\bmod ~2)$ for $1 \leq i \leq e-1.$
Let $I$ be a subset of $\gZ_e$ with size $|I|=l,1 \leq l \leq e.$ Then
$$
i+\sum_{j \in I}(\mbox{odd},J-I)\equiv     \left \{
\begin{array}{ll}
\sum\limits_{\scriptstyle j \in I
\atop \scriptstyle  2 \nmid j} 1\  (\bmod ~2), & \mbox{if} \ 2 \mid i, \\
1+\sum\limits_{\scriptstyle j \in I
\atop \scriptstyle  2 \mid j}1 \equiv 1+l +\sum\limits_{\scriptstyle j \in I
\atop \scriptstyle  2 \nmid j}1  \  (\bmod ~2), & \mbox{if} \ 2 \nmid i.
\end{array}
\right.
$$
By Theorem \ref{thm-six} (1.1), we get $ lf \in \Sigma(g,q) $ for `` odd $l,1 \leq l \leq e$",
and `` even $l, 2 \leq l \leq \frac{e}{2}-1$". On the other hand, for $i \in I,$
$$
\sum_{i \in I}(\mbox{odd},I-i) \equiv
\sum\limits_{\scriptstyle j \in I
\atop \scriptstyle  j \neq i, 2 \nmid j-i} 1 \equiv   \left \{
\begin{array}{ll}
 -1 +A \  (\bmod ~2), & \mbox{if} \ 2 \mid i, \\
-1+B  \  (\bmod ~2), & \mbox{if} \ 2 \nmid i.
\end{array}
\right.
$$
where $A=\sharp \{ j \in I : 2 \mid j\},B=\sharp \{ j \in I : 2 \nmid j\}, |A+B|=|I|=l. $
Let $2 \mid l$ and $2 \leq l \leq e.$ It is easy to find a subset $I$ of $\gZ_e$ with size $|I|=l$
such that both of $A$ and $B$ are odd. Then for each $i \in I,$
$$\sum_{j \in I}(\mbox{odd},I-i) \equiv A-1 \ \mbox{or} \ B-1 \equiv 0 \  (\bmod ~2).$$
By Theorem \ref{thm-six} (1.2), $lf+2 \in \Sigma(eg,q)$ for all even $l, 2 \leq l \leq e.$
\end{proof}

\section{Results for the cases $e=2$ and $4$}\label{sec-cyclotomy}

\subsection{When $e=2$}

Let $q=p^n, p \geq 3, q-1=2f.$ The cyclotomic numbers of order 2 are given in Lemma \ref{thm-cyc}.

\begin{theorem}\label{thm-e2}
Let $q=p^n, p \geq 3, f=\frac{q-1}{2}.$

(1) If $ q \equiv 1 \  (\bmod ~4)$, then $ 2f+2 \in \Sigma(eg,q).$
Moreover, if $ q \equiv 1 \  (\bmod ~8),$ then $f+2 \in \Sigma(eg,q).$

(2) If $ q \equiv 3 \  (\bmod ~4)$, then $2f+2 \in \Sigma(eg,q).$
Moreover, if $ q \equiv 7 \  (\bmod ~8),$ then $f+1 \in \Sigma(eg,q) \cap  \Sigma(g,q) .$
\end{theorem}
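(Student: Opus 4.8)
The plan is to specialize the machinery of Section~\ref{sec-three} to $e=2$, where every quantity becomes completely explicit, and then to read off the four assertions from Theorem~\ref{thm-six} (equivalently, from the $e=2$, hence $e\equiv 2\pmod{4}$, instance of Theorem~\ref{thm-case1}). With $e=2$ we have $\Z_e=\{0,1\}$, $\tfrac e2=1$, $\cD_0$ is the set of nonzero squares and $\cD_1$ the set of nonsquares of $\F_q$, and $f=\tfrac{q-1}{2}$. The only odd residue modulo $2$ is $1$, so $(\mbox{odd},i)=(1,i)$ for $i\in\{0,1\}$, and Lemma~\ref{thm-cyc}(1) gives $(1,0)=(1,1)=\tfrac f2$ when $2\mid f$ and $(1,0)=(1,1)=\tfrac{f-1}{2}$ when $2\nmid f$.

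First I would record the dictionary between the congruence hypotheses on $q$ and the parities of $f$ and $\tfrac f2$: since $q-1=2f$, we have $q\equiv1\pmod{4}\iff2\mid f$, $q\equiv3\pmod{4}\iff2\nmid f$, $q\equiv1\pmod{8}\iff4\mid f$, and $q\equiv7\pmod{8}\iff f\equiv3\pmod{4}$ (while, for contrast, $q\equiv5\pmod{8}\iff f\equiv2\pmod{4}$ and $q\equiv3\pmod{8}\iff f\equiv1\pmod{4}$). For the ``$2f+2$'' claims I take $I=\{0,1\}$, so $|\cS|=2f$ and the target length is $2f+2=q+1$; for the two ``moreover'' claims I take $I=\{0\}$, so $|\cS|=f$ and the target length is $f+2$ or $f+1$. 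In each case one then simply verifies the hypothesis of the appropriate part of Theorem~\ref{thm-six}.

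Concretely: in part~(1) we have $2\mid f$, so $I=\{0,1\}$ falls in Case~1 of Theorem~\ref{thm-six}; here $I-i=\{0,1\}$ for each $i$ and $(\mbox{odd},I-i)=(1,0)+(1,1)=f$ is even, so part~(1.2) yields $2f+2\in\Sigma(eg,q)$. If moreover $q\equiv1\pmod{8}$, i.e. $4\mid f$, then for $I=\{0\}$ we get $(\mbox{odd},I-0)=(1,0)=\tfrac f2$ even, and part~(1.2) again gives $f+2\in\Sigma(eg,q)$. In part~(2) we have $2\nmid f$, so $I=\{0,1\}$ (with $2\mid l$) falls in Case~2; here $|I_{\mbox{odd}}|=1\equiv\tfrac e2\pmod{2}$ and $(\mbox{odd},I-i)=(1,0)+(1,1)=f-1$ is even, so part~(2.2) gives $2f+2\in\Sigma(eg,q)$. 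If moreover $q\equiv7\pmod{8}$, i.e. $f\equiv3\pmod{4}$, then $(\mbox{odd},0)=\tfrac{f-1}{2}$ is odd, hence $\equiv\tfrac e2\pmod{2}$; taking $I=\{0\}$ (so $fl=f$ is odd, Case~3), part~(3.1) gives $f+1\in\Sigma(eg,q)$, and since $0+(\mbox{odd},\{0\})\equiv1\equiv\tfrac e2+|I_{\mbox{odd}}|\pmod{2}$, part~(3.2) gives $f+1\in\Sigma(g,q)$.

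There is no genuinely hard step here; the proof is bookkeeping with the order-$2$ cyclotomic numbers, and the two ``$2f+2=q+1$'' statements are consistent with the classical existence of an MDS self-dual code of length $q+1$. The only point that deserves care is that the ``moreover'' clauses really require a congruence modulo $8$ rather than just modulo $4$: when $q\equiv5\pmod{8}$ one has $\tfrac f2$ odd, and when $q\equiv3\pmod{8}$ one has $\tfrac{f-1}{2}$ even, so in those cases the parity of $(\mbox{odd},0)$ violates the hypothesis of Theorem~\ref{thm-six}(1.2), resp. (3.1)/(3.2), and the construction does not produce those lengths. Keeping track of these parities is essentially the whole content of the argument.
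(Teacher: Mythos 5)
Your proof is correct and follows essentially the same route as the paper: specialize Theorem \ref{thm-six} to $e=2$ using the order-$2$ cyclotomic numbers from Lemma \ref{thm-cyc}, taking $I=\{0,1\}$ for the $2f+2$ claims and $I=\{0\}$ for the ``moreover'' claims. The only (immaterial) difference is in the last step, where you verify condition (3.2) with $I=\{0\}$ while the paper uses $I=\{1\}$; both choices satisfy the hypothesis when $q\equiv 7\pmod 8$.
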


\begin{proof}
(1) If $ q \equiv 1 \  (\bmod ~4)$, then $ 2 \mid f.$ Take $I=\{ 0,1 \}, $
$$(\mbox{odd},I-0)=(\mbox{odd},I-1)=(\mbox{odd},0)+(\mbox{odd},1)=(1,0)+(1,1) \equiv 0 \  (\bmod ~2),$$
from Theorem \ref{thm-six} (1.2), we get $2f+2 \in \Sigma(eg,q).$
Moreover, if $ q \equiv 1 \  (\bmod ~8),$ we take $I=\{ 0 \}$ or $\{ 1 \},$
$$(\mbox{odd},I-0)= (\mbox{odd},0)=(1,0)=\frac{f}{2} \ \mbox{or} \  (\mbox{odd},I-1)=(\mbox{odd},0)=(1,0)=\frac{f}{2}$$
are even, so $f+2 \in \Sigma(eg,q).$

(2) If $ q \equiv 3 \  (\bmod ~4)$, then $ 2 \nmid f.$ Take $I=\{ 0,1 \}, $ then
 $ |I_{\mbox{odd}}|=1=\frac{e}{2}$ and $(\mbox{odd},I-0)=(\mbox{odd},I-1)=(1,0)+(1,1)=f-1 \equiv 0 \  (\bmod ~2),$
from Theorem \ref{thm-six} (2.2), we get $2f+2 \in \Sigma(eg,q).$ Moreover, if $ q \equiv 7 \  (\bmod ~8),$
we take $I=\{ 0 \}$
$$(\mbox{odd},I-0)= (\mbox{odd},0)=(1,0)=\frac{f-1}{2} \equiv 1 \  (\bmod ~2),$$
from Theorem \ref{thm-six} (3.1), we get $f+1 \in \Sigma(eg,q).$

If we take $I=\{ 1 \}$,
$$ 1 + (\mbox{odd},I-1)= 1+ (\mbox{odd},0)=1+(1,0)=\frac{f+1}{2} \equiv  \frac{e}{2}+ |I_{\mbox{odd}}| \  (\bmod ~2),$$
from Theorem \ref{thm-six} (3.2), we get $f+1 \in \Sigma(g,q).$
\end{proof}

\subsection{When $e=4$}

The cyclotomic numbers of order 4 are given in Lemma \ref{thm-cyc}.
Now we get the following constructions of self-dual MDS codes
by  using cyclotomic classes of order four.

\begin{theorem}
Let $q=p^n   \equiv 1\  (\bmod ~4), q-1=4f.$

(1) Assume that $ p \equiv 1\  (\bmod ~4).$

(1.1) If $2 \mid f$ (namely $q \equiv 1\  (\bmod ~8)$), then $f,2f \in \Sigma(g,q)$
and $f+2,2f+2,4f+2 \in \Sigma(eg,q).$ Moreover, if $q \equiv 1\  (\bmod ~16)$ and $t \equiv 2 \  (\bmod ~4),$
or $q \equiv 9\  (\bmod ~16)$ and $ 4 \mid t ,$ then $3f \in \Sigma(g,q).$
If $q \equiv 1 \  (\bmod ~16)$ and $4 \mid t,$
or $q \equiv 9\  (\bmod ~16)$ and $t \equiv 2\  (\bmod ~4),$ then $3f+2 \in \Sigma(eg,q).$

(1.2) If $2 \nmid f$ (namely $q \equiv 5\  (\bmod ~8)$), then $2f \in \Sigma(g,q)$
and $f+1,4f+2 \in \Sigma(eg,q).$

(2) Assume that $ p \equiv 3\  (\bmod ~4).$ Then $n=2m$ is even.

(2.1) If $ p \equiv 7\  (\bmod ~8)$ or $ p \equiv 3\  (\bmod ~8)$ and $2 \mid m,$
then $f,2f \in \Sigma(g,q)$ and $fl+2 \in \Sigma(eg,q)$ for $ 1 \leq l\leq 4.$

(2.2) If $ p \equiv 3\  (\bmod ~8)$ and $2 \nmid m,$
then $fl \in \Sigma(g,q)$ for $ 1 \leq l\leq 3$ and $fl+2 \in \Sigma(g,q)$
for $l=1,2,4.$
\end{theorem}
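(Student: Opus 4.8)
The plan is to feed the explicit order-$4$ cyclotomic numbers of Lemma~\ref{thm-cyc} into Theorem~\ref{thm-six}, handling the four sub-cases in turn. Since $e=4$, the only data Theorem~\ref{thm-six} uses are the parities of $(\mbox{odd},i)=(1,i)_4+(3,i)_4$ for $i\in\Z_4$ and of $|I_{\mbox{odd}}|$, so the first step is to read $(\mbox{odd},i)$ off Table~I (when $2\mid f$) or Table~II (when $2\nmid f$). When $2\mid f$ one finds $(\mbox{odd},0)=B+D$, $(\mbox{odd},1)=D+E$, $(\mbox{odd},2)=2E$, $(\mbox{odd},3)=E+B$, and the identities $16(B+D)=2(q-3+2s)$, $16(D+E)=8(f-t)$, $16(E+B)=8(f+t)$ give
\[
(\mbox{odd},0)=\frac{q-3+2s}{8},\qquad (\mbox{odd},1)=\frac{f-t}{2},\qquad (\mbox{odd},3)=\frac{f+t}{2},
\]
with $(\mbox{odd},2)$ manifestly even; when $2\nmid f$ a parallel reading of Table~II gives $(\mbox{odd},0)=2E$, $(\mbox{odd},1)=D+E$, $(\mbox{odd},2)=B+D$, $(\mbox{odd},3)=B+E$.

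The second step is to determine parities. When $2\nmid f$ (that is, $q\equiv5\ (\bmod~8)$) we need only that $(\mbox{odd},0)=2E$ and $\sum_{k\in\Z_4}(\mbox{odd},k)=4E+2B+2D$ are even, which is immediate. When $2\mid f$ (that is, $q\equiv1\ (\bmod~8)$, so $q\bmod16\in\{1,9\}$) we use the representation $q=s^2+4t^2$ with $s\equiv1\ (\bmod~4)$: it forces $t$ even, say $t=2t'$, and writing $s+1=2u$ with $u$ odd one gets $\frac{q-3+2s}{8}=\frac{u^2-1}{2}+2t'^2$, so $(\mbox{odd},0)$ and (trivially) $(\mbox{odd},2)$ are always even, while $(\mbox{odd},1)\equiv(\mbox{odd},3)\equiv t'\ (\bmod~2)$ if $q\equiv1\ (\bmod~16)$ and $(\mbox{odd},1)\equiv(\mbox{odd},3)\equiv 1+t'\ (\bmod~2)$ if $q\equiv9\ (\bmod~16)$ (reading $\frac f2\bmod2$ off $q\bmod16$). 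This is precisely where the conditions on $q\ (\bmod~16)$ and $t\ (\bmod~4)$ in part~(1.1) enter.

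The third step is, for each asserted length, to pick a subset $I\subseteq\Z_4$ of the right size $l$ and verify the hypothesis of the relevant clause of Theorem~\ref{thm-six}. For part~(1.1) ($2\mid f$, hence Case~1 throughout): any singleton $I$ gives $f\in\Sigma(g,q)$; $I=\{0,2\}$ gives $2f\in\Sigma(g,q)$ and, since $(\mbox{odd},0),(\mbox{odd},2)$ are even, also $2f+2\in\Sigma(eg,q)$; $I=\{0\}$ gives $f+2\in\Sigma(eg,q)$; and $I=\Z_4$ gives $4f+2\in\Sigma(eg,q)$ because $\sum_k(\mbox{odd},k)$ is even. For the ``moreover'' clauses, put $c=(\mbox{odd},1)\bmod2=(\mbox{odd},3)\bmod2$ and take a size-$3$ subset $I$ with $a$ even and $b$ odd members (necessarily $a,b\geq1$, as $\Z_4$ has only two of each parity); a quick count shows, for $i\in I$, that $i+(\mbox{odd},I-i)\equiv cb$ when $i$ is even and $\equiv1+ca$ when $i$ is odd, while $(\mbox{odd},I-i)\equiv cb$, resp.\ $ca$. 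Since $a+b=3$ is odd, the first expression is constant in $i$ iff $c=1$ (then, e.g., $I=\{0,1,2\}$ works, giving $3f\in\Sigma(g,q)$), and $(\mbox{odd},I-i)$ is uniformly even iff $c=0$ (giving $3f+2\in\Sigma(eg,q)$); matching $c$ against the parity table of the second step produces exactly the stated congruence conditions. Part~(1.2) ($2\nmid f$, hence Cases~2 and~3) runs the same way but unconditionally: $I=\{0,2\}$ gives $2f\in\Sigma(g,q)$; $I=\{0\}$ gives $f+1\in\Sigma(eg,q)$ (here $(\mbox{odd},0)\equiv0\equiv\frac e2$); and $I=\Z_4$, for which $|I_{\mbox{odd}}|=2\equiv\frac e2$ and $\sum_k(\mbox{odd},k)\equiv0$, gives $4f+2\in\Sigma(eg,q)$.

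Finally, part~(2) follows from the semiprimitive results already established. When $p\equiv3\ (\bmod~4)$ the exponent $n=2m$ is even, $q=r^2$ with $r=p^m$, and $p$ is semiprimitive modulo $e=4$ with least exponent $t=1$; hence Theorem~\ref{thm-semi} applies with $R=p^m(-1)^m$ and $\eta=\frac{R-1}{4}$, and everything reduces to the parity of $\eta$. If $p\equiv7\ (\bmod~8)$, or $p\equiv3\ (\bmod~8)$ with $m$ even, then $R\equiv1\ (\bmod~8)$, so $\eta$ is even and Theorem~\ref{thm-semi}(1) with $e=4$ yields exactly~(2.1); if $p\equiv3\ (\bmod~8)$ with $m$ odd, then $R\equiv5\ (\bmod~8)$, so $\eta$ is odd and Theorem~\ref{thm-semi}(2) with $4\mid e$ yields exactly~(2.2) (the claim there should of course read $\Sigma(eg,q)$ for the entries $l=1,2,4$). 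The main obstacle is the ``moreover'' clause of~(1.1): there one must compute $(\mbox{odd},i)\bmod2$ for every $i\in\Z_4$ finely enough to expose the dependence on $q\ (\bmod~16)$ and $t\ (\bmod~4)$, and for this the representation $q=s^2+4t^2$ (together with $s\equiv1\ (\bmod~4)$) is genuinely needed in order to control $(\mbox{odd},0)$; the remainder is careful bookkeeping of subset choices through Theorem~\ref{thm-six}.
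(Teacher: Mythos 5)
Your proposal is correct and follows essentially the same route as the paper: read the order-$4$ cyclotomic numbers from Lemma \ref{thm-cyc} and feed their parities into Theorem \ref{thm-six} for part (1), then invoke the semiprimitive result (Theorem \ref{thm-semi}) for part (2). In fact your handling of the ``moreover'' clause of (1.1) --- getting $3f\in\Sigma(g,q)$ exactly when $(\mbox{odd},1)$ is odd and $3f+2\in\Sigma(eg,q)$ exactly when it is even, which is what the stated congruence conditions on $q\ (\bmod\ 16)$ and $t\ (\bmod\ 4)$ encode --- is more faithful to the theorem's statement than the paper's own proof, which asserts both conclusions under the single condition $2\mid(\mbox{odd},1)$; likewise your unconditional choice $I=\{0,2\}$ for $2f\in\Sigma(g,q)$ in (1.2) is cleaner than the paper's conditional use of $I=\{0,1\}$ or $\{0,3\}$.
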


\begin{proof}
(1.1) If $2 \mid f$, the conclusion $f,2f \in \Sigma(g,q)$
can be derived from Theorem \ref{thm-six} (1.1) and Lemma \ref{thm-cyc} (2.1) by taking
$I=\{0\} $ and $\{0,2\}$  respectively.
The conclusion
 $f+2,2f+2,4f+2 \in \Sigma(eg,q)$ can be derived from Theorem \ref{thm-six} (1.2)
 and Lemma \ref{thm-cyc} (2.1) by taking
$I=\{0\}, \{0,2\}$ and $\{ 0,1,2,3\}$ respectively.

Moreover, from
$q \equiv 1\  (\bmod ~8), q=s^2 +4t^2$ and $ s \equiv 1 \  (\bmod ~4),$
we get $ 2 \mid t.$ If $ q \equiv 1 \  (\bmod ~16), 4 \mid t$ or
$ q \equiv 9\  (\bmod ~16),  t \equiv 2\  (\bmod ~4),$
then $ 8(\mbox{odd},1)=q-1-4t  \equiv 0\  (\bmod ~16).$
By Lemma \ref{thm-cyc}, $(\mbox{odd},i) \equiv  0 \  (\bmod ~2)$ for all $ 0 \leq i \leq 3.$
From Theorem \ref{thm-six} (1,1) and (1,2), we get $3f \in \Sigma(g,q),
3f+2 \in \Sigma(eg,q)$ by taking $I=\{0,1,2\}$ respectively.

(1.2) If $2 \nmid f,$ the conclusion $2f \in \Sigma(g,q)$
can be derived from Theorem \ref{thm-six} (II,1) and Lemma \ref{thm-cyc} (2.2), by taking $I=\{0,1\}$
or $\{0,3\}$ provided $ (\mbox{odd},1)=0$  or $ (\mbox{odd},3)=0$ respectively. The conclusion
$4f+2 \in \Sigma(eg,q)$ can be derived from Theorem \ref{thm-six} (II,2) and Lemma \ref{thm-cyc}, by taking
$I=\{0,1,2,3\}.$ The conclusion $f+1  \in \Sigma(eg,q)$ can be derived from Theorem \ref{thm-six}
by taking $I=\{0\}.$

(2) Assume that $q \equiv 3  \equiv -1\  (\bmod ~4).$ This is the semiprimitive case. Thus
$q=r^2, r=p^m, 2 \mid f=\frac{q-1}{4}$ and $\eta=\frac{(-p)^m-1}{4}.$ If $ p \equiv 7\  (\bmod ~8)$
or $ p \equiv 3\  (\bmod ~8)$ and $2 \mid m,$ then $2 \mid \eta.$ If
$ p \equiv 3\  (\bmod ~8)$ and $2 \nmid m,$ then $2 \nmid \eta.$ The conclusion of (2.1) and (2.2)
can be derived directly from Theorem \ref{thm-semi}.
\end{proof}

\begin{remark} We have examples satisfying the conditions provided in (1.1). Let $q=p^n, p \equiv 1\  (\bmod ~4).$
For condition $q \equiv 1\  (\bmod ~16)$ and $4 \mid t,$ we have example $q=113=s^2 +4t^2=(-7)^2 +4 \cdot 4^2.$
For condition $q \equiv 9 \  (\bmod ~16)$ and $t \equiv 2\  (\bmod ~4),$ we have examples
$q=25=(-3)^2 +4 \cdot 2^2$ and $q=41=5^2 +4 \cdot 2^2.$
\end{remark}

\begin{remark}
One of our further work is consider the existence of MDS self-dual codes via cyclotomic numbers of order 6 and 8.
\end{remark}

\section*{Acknowledgements}

  A. Zhang's research was supported by the Natural Science Foundation of China under Grant No:11401468
 and the Natural Science Basic Research Plan in Shaanxi Province of China under Grant No:2019JQ-333.


\end{document}